\documentclass[a4paper, onecolumn, unpublished]{quantumarticle}
\pdfoutput=1

\usepackage[utf8]{inputenc}
\usepackage[english]{babel}
\usepackage[T1]{fontenc}
\usepackage{amsmath, amsthm}
\usepackage[numbers]{natbib}
\usepackage{hyperref}
\usepackage{amssymb, physics}
\usepackage{graphicx}
\usepackage{nameref} %
\usepackage{tikz}
\usetikzlibrary{decorations.pathreplacing, arrows.meta,positioning,fit,backgrounds,calc}
\usepackage{booktabs}
\usepackage{adjustbox}
\usepackage{subcaption}
\usepackage{fontawesome5}
\usepackage{xcolor}
\usepackage{placeins}
\usepackage{quantikz}

\newcommand{\clifftplay}[1]{\href{https://unitaryfoundation.github.io/clifft/playground/?url=https://raw.githubusercontent.com/unitaryfoundation/clifft-paper/main/qec_bench/circuits/#1}{\faPlay}}

\newcommand{\clifftplaydefault}[1]{\href{https://unitaryfoundation.github.io/clifft/playground}{\faPlay}}

\definecolor{UFYellow}{HTML}{FFF200}
\definecolor{UFBlack}{HTML}{111111}
\definecolor{UFGray}{HTML}{595959}
\definecolor{UFBlue}{HTML}{1F5FA6}
\definecolor{UFOrange}{HTML}{D97706}
\definecolor{UFRed}{HTML}{B42318}

\newtheorem{theorem}{Theorem}
\newtheorem{lemma}{Lemma}
\newtheorem{definition}{Definition}

\makeatletter
\newcommand{\myfirstpagefootnote}[1]{%
  \insert\footins{\footnotesize #1\par}%
}
\apptocmd{\@printauthors}{%
  \myfirstpagefootnote{GitHub repository: \url{https://github.com/unitaryfoundation/clifft}}}{}{}
\makeatother

\begin{document}

    \title{Clifft: Fast Exact Simulation of Near-Clifford Quantum Circuits}

    \author{Bradley Chase}
    \email{brad@unitary.foundation}
    \author{Farrokh Labib}
    \email{farrokh@unitary.foundation }

    \affiliation{Unitary Foundation}

    \maketitle

    \begin{abstract}
    Exact classical simulation of fault-tolerant quantum circuits remains limited by a tradeoff between exponential state vector scaling, exponential $T$-count scaling in stabilizer-rank approaches, and per-shot tracking overhead in sparse generalized stabilizer simulators. In this work, we introduce Clifft, an open-source simulator that shifts the dominant exponential cost from the total qubit count to a dynamic active subspace by factoring the quantum state into an offline Clifford frame, an online Pauli frame, and a dynamically sized active state vector. This architecture resolves deterministic Clifford coordinate transformations ahead of time, generalizing Stim's compile-once, sample-many execution model to circuits with non-Clifford operations. Consequently, exponential simulation costs are determined by the peak active virtual dimension, which expands during non-Clifford operations and contracts during measurements. Clifft remains within a constant factor of standard tools in the pure-Clifford and non-Clifford limits, while delivering up to orders-of-magnitude throughput gains over GPU-accelerated near-Clifford simulators on low-magic fault-tolerant benchmarks. Executing on commodity CPUs and exposing a Stim-like API, Clifft enables, to our knowledge, the first exact end-to-end simulation of magic state cultivation including the escape stage, over hundreds of billions of shots. These simulations show that escape-stage failures suppress the discrepancy between the true $T$-gate circuit and its $S$-proxy at low decoder-gap thresholds, while at high thresholds the full-protocol behavior approaches the larger discrepancy observed in the cultivation stages alone.
    \end{abstract}

    \section{Introduction}

    Classical simulation plays a central role in the development of fault-tolerant quantum computing. It is used to design and validate fault-tolerant protocols, estimate logical error rates, study decoder behavior, and compare architectural tradeoffs before hardware can reach the relevant scale~\cite{bluvsteinFaulttolerantNeutralatomArchitecture2026, google2023suppressing, fowler2012surface, Dennis2002Topological, Higgott2022PyMatching, Bravyi2024}. For circuits composed entirely of Clifford operations and Pauli measurements, this workflow is enabled by high-throughput stabilizer simulators such as Stim~\cite{gidneyStimFastStabilizer2021}. By combining efficient tableau methods of Aaronson and Gottesman \cite{aaronsonImprovedSimulationStabilizer2004} with a compile-once, sample-many Pauli-frame workflow and highly optimized kernels, Stim made it practical to sample the billions and trillions of shots needed to evaluate the performance of protocols with increasingly lower logical error rates. Its wide adoption illustrates how rapidly the field can advance when provided with a purpose-built simulator capable of operating at scale.

    However, many important fault-tolerant protocols are not purely Clifford \cite{gidneyMagicStateCultivation2024,itogawaEfficientMagicState2025, chamberlandVeryLowOverhead2020, litinskiMagicStateDistillation2019}. Universal quantum computation requires non-Clifford resources~\cite{Bravyi2005Universal,howardContextualitySuppliesMagic2014}, and the corresponding circuits quickly leave the regime where stabilizer methods apply directly. Existing simulation strategies each address part of this problem, but with different tradeoffs. Dense state vector methods are exact and flexible, but their memory cost grows exponentially with the total qubit count \cite{qiskit2024, quantum_ai_team_and_collaborators_2025_4067237}. Tensor-network and matrix-product-state methods can be effective when entanglement remains limited, but the deep, highly connected operations of some fault-tolerant gadgets incur heavy costs~\cite{aryalDensityMatrixRenormalizationGroup2023, huangClassicalSimulationQuantum2020, markov2008simulating, Schuch2008}. Low-magic simulation methods based on stabilizer-rank decompositions, Pauli-propagation and generalized stabilizer approaches, instead try to exploit the fact that many fault-tolerant circuits remain close to the Clifford regime~\cite{bravyiImprovedClassicalSimulation2016, rallSimulationQubitQuantum2019, kissingerClassicalSimulationQuantum2022, sutcliffeFastClassicalSimulation2025, haenelTsimFastUniversal2026, surtiEfficientSimulationLogical2025}. A complementary line of work exploits continuous algebraic structure to obtain efficient simulation when the circuit is constrained enough, including matchgate and fermionic-linear-optics circuits~\cite{knillFermionicLinearOptics2001}, permutation-symmetric systems~\cite{anschuetzEfficientClassicalAlgorithms2023}, and Lie-algebraic simulation when the dynamical Lie algebra is of polynomial dimension~\cite{gohLieAlgebraicClassical2025, barligea2026enablingliealgebraicclassicalsimulation}. These methods are powerful for the structured circuits they target, but the non-Clifford gates and feed-forward measurements central to fault-tolerant magic-state protocols generally fall outside the symmetry classes they exploit. In practice, implementing these methods exposes practical software engineering tradeoffs across total non-Clifford count, qubit count, dynamic tracking overhead, and shot throughput.

    This challenge is especially clear in magic-state preparation protocols. Magic State Cultivation (MSC)~\cite{gidneyMagicStateCultivation2024} has emerged as a particularly compelling target because it promises low-overhead preparation of high-fidelity logical $|T\rangle$ states, and has already motivated both simulation and experimental follow-on work~\cite{rosenfeldMagicStateCultivation2025, chenEfficientMagicState2026, dasuBreakingEvenMagic2025a, sahayFoldtransversalSurfaceCode2026a, hiranoEfficientMagicState2025, vakninEfficientMagicState2026}. At the same time, MSC is a demanding standalone benchmark for simulation. It is representative of an important near-Clifford regime common for quantum error correction (QEC): circuits whose dominant propagation, measurement, and noise structure is Clifford, but whose correctness and performance depend on a relatively small set of non-Clifford operations. Answering the key fault-tolerance questions in this regime requires not only exact treatment of the true non-Clifford circuit, but also sampling at the scale of millions to trillions of shots to evaluate performance at increasingly lower physical noise rates. In the original MSC study, large-scale evaluation  relied on an $S$-gate proxy circuit amenable to simulation in Stim, because an exact simulator for the full noisy non-Clifford circuit was not yet available at useful scale~\cite{gidneyMagicStateCultivation2024}.

    Recent work has made major progress on this frontier. Li~et~al. developed SOFT~\cite{liSOFTHighperformanceSimulator2025} which implemented a generalized stabilizer representation on GPUs to obtain the first ground-truth large-scale simulations of the cultivation circuit at code distance $d=5$. Tuloup and Ayral's Pauli Frame Sparse Representation (PFSR)~\cite{tuloupComputingLogicalError2026a} used a similar sparse non-Clifford structure along with stabilizer-frame tracking, similarly finding that the true $T$-gate cultivation circuit can differ substantially from the $S$-proxy approximation. Tsim~\cite{haenelTsimFastUniversal2026} brings a compile-once, sample-many approach to universal noisy QEC circuits using ZX-calculus reductions and stabilizer-rank methods \cite{wanCuttingStabiliserDecompositions2025, wanSimulatingMagicState2026a}. For easier adoption, Tsim is compatible with the Stim API and extends the Stim circuit format with non-Clifford instructions. Together, these tools demonstrate that the gap beyond pure Clifford simulation is beginning to close. However, these methods still face practical limits as system sizes and shot count increase. For MSC in particular, prior studies omit the final escape stage, leaving a critical gap in end-to-end analysis.

    A complementary contemporaneous direction is SyQMA~\cite{umbrarescuSyQMAMemoryefficientSymbolic2026}, an exact symbolic simulator for universal noisy QEC circuits that augments stabilizer tableaus with auxiliary qubits and a modified trace to compute closed-form expectation values, measurement probabilities, and logical-error-rate expressions. SyQMA targets analytic small-to-moderate QEC-gadget studies such as circuit-level maximum-likelihood decoding and fault-distance verification, whereas the present work focuses on high-throughput exact sampling at the scale needed for end-to-end MSC.

    In this work, we introduce \emph{Clifft}, an open-source simulator designed for this regime. Clifft\footnote{Clifft is a play on "Clifford+T" and pronounced as one syllable.} combines a Stim-like ahead-of-time compilation philosophy with an execution model that confines dense quantum evolution to the active non-Clifford subspace. Like Tsim, Clifft accepts Stim-compatible circuits, including noise, mid-circuit measurements, detector annotations, and classically controlled operations, and adds support for non-Clifford gates. At a high level, Clifft factors the simulated state into a deterministic offline Clifford frame, a virtual Pauli frame for lightweight shot-dependent updates, and a dense active state vector that tracks only the current non-Clifford degrees of freedom. This separates deterministic coordinate transformations from coherent amplitude evolution, bounding dense runtime array operations by $\mathcal{O}(2^{k})$, where $k$ represents the \textit{active dimension} of the state vector. $k$ is dynamic; it expands when non-Clifford gates are applied and contracts when stabilizer measurements collapse the superposition. For many near-Clifford QEC protocols like MSC, frequent interleaved measurements keep these non-Clifford effects spatially and temporally localized. For example, even though the end-to-end MSC protocol of Gidney et al.~\cite{gidneyMagicStateCultivation2024} uses 463 physical qubits through the escape stage, the peak active dimension never exceeds $k_{\max} = 10$.

    While sparse simulators like SOFT and PFSR also scale exponentially with the active dimension, they face online tracking overhead. Their dynamic $\mathcal{O}(N^2)$ tableau eliminations and canonicalization steps become increasingly costly each shot at the full $N=463$ end-to-end scale, helping explain why prior MSC studies stopped after the 42-qubit cultivation stage. In contrast, Clifft evaluates all $\mathcal{O}(N^2)$ tableau updates entirely ahead of time. Runtime, or sample time, execution is thus reduced to lightweight bitwise frame updates and CPU-friendly sweeps over the active state vector array. This shifts the dominant expected per-shot cost to active-array operations scaling as $\mathcal{O}(2^{k_{\max}})$, while confining the remaining dependence on the total qubit count $N$ to lightweight frame, error, and record updates (detailed in Section~\ref{sec:complexity}). In practice, this allows Clifft to sustain hundreds of thousands of shots per second on standard commodity CPUs even for complex fault-tolerant gadgets.

    Our main contributions and results are as follows:
    \begin{itemize}
        \item We introduce a frame-factored state representation that separates offline Clifford compilation, online Pauli-frame tracking, and online dense active-state evolution, yielding exact simulation costs that scale exponentially in the active virtual dimension rather than the full system size.
        \item We develop and open-source Clifft, a Python package with a high-performance C++ core that extends the Stim API to support non-Clifford gates. Translating this frame-factored representation into practice, its custom compiler-to-VM pipeline executes the offline Clifford pass ahead of time, leaving only lightweight Pauli tracking and localized non-Clifford operations for online execution.
        \item We demonstrate that Clifft is competitive across pure Clifford, near-Clifford, and non-Clifford regimes, remaining within an order of magnitude of Stim in the pure-Clifford limit and close to leading CPU state vector simulations in the dense limit, while providing its clearest advantage on low-magic fault-tolerant circuits.
        \item We perform, to our knowledge, the first exact end-to-end simulation of Magic State Cultivation, including the escape stage, over hundreds of billions of shots. These simulations show that escape-stage decoding failures mask the true $T$-gate versus $S$-proxy discrepancy at low decoder-gap thresholds, while at higher thresholds the end-to-end behavior approaches the larger discrepancy observed in the inject and cultivation stages alone.
    \end{itemize}

    Clifft is Apache 2.0 licensed and is available as \texttt{clifft} on PyPI. Clifft also includes a WASM-based interactive playground for exploring circuits, representations and simulations within a standard web browser. Where appropriate, circuits discussed in this paper are linked directly to corresponding playground examples (as \clifftplaydefault\ \ hyperlinks).

   The remainder of this paper is organized as follows. Section~\ref{sec:theory} defines the frame-factored representation and its complexity bounds. Section~\ref{sec:architecture} describes the Clifft compiler, runtime, and validation strategy. Section~\ref{sec:results} benchmarks Clifft across simulation regimes and applies it to exact end-to-end MSC simulation. Section~\ref{sec:conclusion} concludes with a summary of results and future directions.

    \section{Frame-Factored State Representation}
    \label{sec:theory}

    Standard state vector simulation scales exponentially with the total number of qubits in the system. However, fault-tolerant quantum circuits are typically dominated by Clifford operations, which can be tracked efficiently via the Gottesman-Knill theorem \cite{gottesman1998heisenbergrepresentationquantumcomputers}. We utilize this structure by shifting from the standard Schrödinger picture to a hybrid representation that decouples the state into three distinct objects: an offline Clifford coordinate frame, a dynamically sized state vector, and a lightweight Pauli tracking frame.

    By mapping operations into a Pauli tracking frame (which we term the \emph{virtual basis}) and compressing non-Clifford entanglement through a process we call \emph{Pauli localization}, the dense state vector is restricted to a small \emph{active subspace}. Decoupling this discrete coordinate tracking from the continuous amplitude evolution allows exact simulation to scale with the active dimension, $k$, rather than the physical qubit count, $N$. We define this frame-factored representation and its underlying localized virtual geometry in Section~\ref{sec:theory}.

    \begin{figure*}[t]
    \centering
    \begin{tikzpicture}[
        >=stealth, font=\sffamily,
        wire/.style={thick, draw=black!60},
        activewire/.style={thick, draw=red!80!black},
        dormantwire/.style={thick, dashed, draw=gray!60},
        clifford/.style={rectangle, draw=blue!80!black, fill=blue!5, thick, minimum size=0.6cm, rounded corners=0.5ex},
        nonclifford/.style={rectangle, draw=red!80!black, fill=red!5, thick, minimum size=0.6cm, rounded corners=0.5ex},
        vclifford/.style={rectangle, draw=orange!80!black, fill=orange!5, thick, minimum size=0.5cm, rounded corners=0.5ex},
        sectionlabel/.style={font=\bfseries\small, anchor=south, align=center},
        indicator/.style={draw=gray!80, dashed, thick, rounded corners}
    ]

    \node[sectionlabel] at (1.0, 0.8) {Input Circuit};
    \draw[wire] (-0.2, 0) node[left, font=\scriptsize] {$q_0$} -- (2.2, 0);
    \draw[wire] (-0.2, -1) node[left, font=\scriptsize] {$q_1$} -- (2.2, -1);

    \node[clifford, font=\scriptsize] at (0.4, 0) {$H$};

    \filldraw[blue!80!black] (1.0, 0) circle (2pt);
    \draw[thick, blue!80!black] (1.0, 0) -- (1.0, -1);
    \node[circle, draw=blue!80!black, thick, fill=blue!5, inner sep=0pt, minimum size=6pt] (cnot1) at (1.0, -1) {};
    \draw[thick, blue!80!black] (cnot1.north) -- (cnot1.south);
    \draw[thick, blue!80!black] (cnot1.east) -- (cnot1.west);

    \node[nonclifford, font=\scriptsize] (Tgate) at (1.8, -1) {$e^{-i\frac{\pi}{8} Z_1}$};
    \node[font=\scriptsize, above=2pt] at (Tgate.north) {($T$)};

    \draw[->, thick] (2.4, -0.5) -- (3.0, -0.5) node[midway, above, font=\scriptsize, align=center] {Map};

    \node[sectionlabel] at (4.7, 0.8) {Heisenberg Mapped};

    \draw[wire] (3.2, 0) -- (6.3, 0);
    \draw[wire] (3.2, -1) -- (6.3, -1);

    \node[nonclifford, font=\scriptsize, minimum height=1.2cm] (PO1) at (4.3, -0.5) {$e^{-i\frac{\pi}{8} \tilde{X}_0 \otimes \tilde{Z}_1}$};

    \node[clifford, font=\scriptsize, minimum height=1.8cm, minimum width=0.7cm] (UC) at (5.8, -0.5) {$U_C^{(t)}$};

    \draw[->, thick] (6.5, -0.5) -- (7.3, -0.5)
        node[midway, above, font=\scriptsize, align=center] {Localize}
        node[midway, below=2pt, font=\scriptsize, text=orange!80!black, align=center] {$V$};

    \node[sectionlabel] at (8.9, 0.8) {Pauli Localized};

    \draw[indicator] (7.8, 0.6) rectangle (10.0, -1.6);

    \draw[->, dashed, thick, gray!80!black, rounded corners=4pt] (PO1.south) -- (4.3, -1.9) -| (8.9, -1.6);

    \draw[wire] (7.5, 0) -- (7.8, 0);

    \draw[activewire] (7.8, 0) -- (10.5, 0) node[right, font=\scriptsize, text=red!80!black] {$A$};

    \draw[wire] (7.5, -1) -- (7.8, -1);
    \draw[dormantwire] (7.8, -1) -- (10.5, -1) node[right, font=\scriptsize, text=gray!80!black] {$D$};

    \node[nonclifford, font=\scriptsize] at (8.9, 0) {$e^{-i\frac{\pi}{8} \tilde{Z}_0}$};
    \node[font=\tiny, text=gray!80!black] at (8.9, -0.8) {$k \to k+1$};

    \draw[->, thick] (11.0, -0.5) -- (11.7, -0.5) node[midway, above, font=\scriptsize, align=center] {Measure};

    \node[sectionlabel] at (13.2, 0.8) {Projected State};

    \draw[indicator] (12.0, 0.6) rectangle (14.3, -1.6);

    \draw[activewire] (11.9, 0) -- (13.0, 0);

    \draw[dormantwire] (13.0, 0) -- (14.7, 0);
    \draw[dormantwire] (11.9, -1) -- (14.7, -1);

    \node[nonclifford, font=\scriptsize] at (13.0, 0) {$\Pi_m$};

    \node[font=\tiny, text=gray!80!black] at (13.15, -0.8) {$k \to k-1$};

    \end{tikzpicture}
    \caption{Lifecycle of an active operation in the frame-factored representation. Physical Clifford gates are absorbed into the offline Clifford frame $U_C$, while a non-Clifford operation is Heisenberg mapped into a virtual Pauli rotation. Pauli localization uses virtual Clifford transformations to reduce the mapped multi-qubit Pauli string to a single virtual axis. The resulting localized operation either updates the active state vector or promotes a dormant virtual qubit into the active set, increasing $k$. Later projective measurements can collapse an active axis and return it to the dormant set, decreasing $k$. Stochastic Pauli noise and feed-forward corrections update only the virtual Pauli frame $\tilde{P}$ (not shown).}
    \label{fig:localization_pipeline}
\end{figure*}
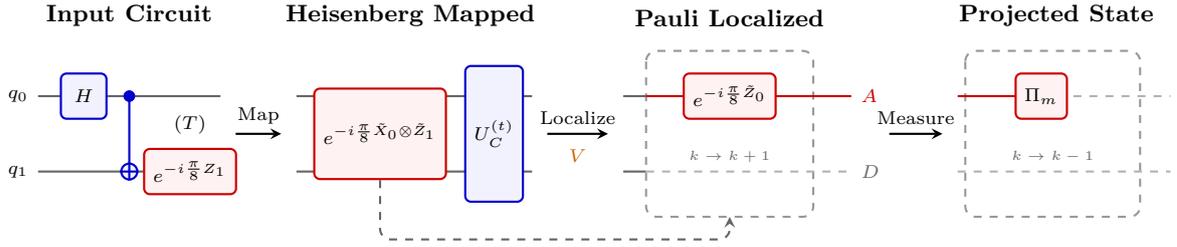

    \subsection{Circuit Model and the Heisenberg Picture}
    \label{sec:heisenberg}

    We consider a quantum circuit over $N$ qubits as a sequence of operations applied at discrete timesteps $t = 0,1,\ldots$. Let $P \in \mathcal{P}_N$ be an $N$-qubit Pauli string operator, where $\mathcal{P}_N$ denotes the $N$-qubit Pauli group. We partition circuit operations into two classes:
    \begin{enumerate}
        \item \textbf{Passive operations:} Clifford unitaries $C \in \mathcal{C}_N$, where the Clifford group $\mathcal{C}_N = \{U : U \mathcal{P}_N U^\dagger = \mathcal{P}_N\}$ is the normalizer of the Pauli group.
        \item \textbf{Active operations:} Non-Clifford Pauli rotations, projective Pauli measurements, stochastic Pauli noise, and conditionally applied Pauli corrections.
    \end{enumerate}
    This model supports standard Clifford+$T$ circuits, Pauli noise, mid-circuit measurements, and classical feed-forward. More general non-Clifford unitary gates can be decomposed into Pauli rotations before evaluation.

    To isolate the non-Clifford complexity, we use a hybrid Heisenberg-picture formulation. Since Clifford gates normalize the Pauli group, a physical Clifford gate applied to the state can instead be treated as a passive update to the coordinate system. Let $U_C^{(t)}$ be the cumulative Clifford frame representing these coordinate transformations up to time $t$. For any active operation $O$ applied in the physical basis at time $t$, its Pauli generator $P_O$ is mapped into the virtual basis by conjugating through this frame:
    \begin{definition}[Heisenberg Mapping]
    \label{def:heisenberg_map}
    \begin{equation}
        \label{eq:heisenberg_map}
        \tilde{P}_O = (U_C^{(t)})^\dagger P_O U_C^{(t)}.
    \end{equation}
    Here $\tilde{P}_O$ is the virtual Pauli generator corresponding to the physical operation $O$ in the Clifford frame at time $t$.
    \end{definition}

    Because active operations in the supported circuit model are generated by local Pauli operators, the mapped generator $\tilde{P}_O$ remains a valid Pauli string, although it may act on many virtual qubits. This transformation is conceptually similar to the circuit transformations used in Pauli-based computation approaches~\cite{bravyiTradingClassicalQuantum2016,litinskiGameSurfaceCodes2019a} and Pauli-centric compiler optimization frameworks~\cite{liPaulihedralGeneralizedBlockWise2021, paykinPCOASTPaulibasedQuantum2023, wangTableauBasedFramework2025}.

    \subsection{The Frame-Factored State}
    \label{sec:representation}

    The Heisenberg mapping lets all active operations be expressed in the virtual basis induced by the Clifford frame. We represent the physical quantum state at time $t$ as:
    \begin{definition}[Frame-Factored State]
    \label{def:frame_decomposed}
    \begin{equation}
        \label{eq:factored_state}
        |\psi^{(t)}\rangle =
        \gamma^{(t)} U_C^{(t)} \tilde{P}^{(t)}
        \Big( |\phi^{(t)}\rangle_A \otimes |0\rangle_D \Big),
    \end{equation}
    where:
    \begin{itemize}
        \item $U_C^{(t)} \in \mathcal{C}_N$ is the \textbf{Clifford frame}, a deterministic unitary mapping from the virtual basis to the physical laboratory basis.
        \item $\tilde{P}^{(t)}$ is the \textbf{virtual Pauli frame}, an $N$-qubit phase-free Pauli operator parameterized by binary vectors $\mathbf{x}^{(t)}, \mathbf{z}^{(t)} \in \{0,1\}^N$:
        \begin{equation}
            \tilde{P}^{(t)} =
            \bigotimes_{j=0}^{N-1}
            X_j^{x_j^{(t)}} Z_j^{z_j^{(t)}} .
        \end{equation}
        \item $A$ and $D$ are disjoint sets of \textbf{active} and \textbf{dormant} virtual qubits, with $|A|=k$ and $|D|=N-k$.
        \item $|\phi^{(t)}\rangle_A \in \mathbb{C}^{2^k}$ is the \textbf{active state vector}, which stores the continuous amplitudes associated with the active virtual subspace.
        \item $\gamma^{(t)} \in \mathbb{C}$ is a \textbf{global scalar} tracking global phase and normalization.
    \end{itemize}
    \end{definition}

    The dormant subspace is fixed to $|0\rangle_D$ in the virtual basis. All coherent superposition not captured by the Clifford and Pauli frames is therefore confined to the active state vector. The maximum active dimension over execution,
    \begin{equation}
        k_{\max} = \max_t |A^{(t)}|,
    \end{equation}
    bounds the dominant exponential cost of simulation. For near-Clifford protocols with frequent measurements, $k_{\max}$ can be much smaller than the total number of physical qubits $N$.

    At initialization, $U_C^{(0)}=I$, $\tilde{P}^{(0)}=I$, $A=\emptyset$, $D=\{0,\ldots,N-1\}$, $|\phi^{(0)}\rangle_A=[1]$, and $\gamma^{(0)}=1$.

    \subsection{Virtual Operations and Pauli Localization}
    \label{sec:virtual_ops}

    Because evaluating the amplitudes of the active state vector scales exponentially with the dimension $k$, the primary objective is to localize non-Clifford evolution to as few active virtual qubits as possible. We achieve this via \emph{Pauli localization}.

    Suppose an active operation transforms to a multi-qubit virtual Pauli generator $\tilde{P}_O$ under the Heisenberg mapping described in Equation~\ref{eq:heisenberg_map}. We construct a sequence of virtual Clifford gates $V \in \mathcal{C}_N$ to map this multi-qubit operator to a single-qubit one:

    \begin{lemma}[Pauli Localization]
        \label{lem:pauli_localization}
        For any non-identity $N$-qubit virtual Pauli operator $\tilde{P}_O$, there exists a sequence of at most $2N$ virtual Clifford gates $V \in \mathcal{C}_N$ such that $V \tilde{P}_O V^\dagger = \alpha P_v$, where $P_v \in \{X_v, Y_v, Z_v\}$ acts non-trivially on exactly one virtual qubit $v$, and $\alpha \in \{\pm 1, \pm i\}$.
    \end{lemma}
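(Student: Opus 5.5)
We argue constructively, in two phases: first rotate every non-trivial tensor factor of $\tilde{P}_O$ into $Z$, then merge the resulting $Z$-string onto a single qubit with CNOTs. Write $\tilde{P}_O = \beta \bigotimes_{j} Q_j$ with $Q_j \in \{I,X,Y,Z\}$ and $\beta \in \{\pm 1, \pm i\}$. Let $S = \{j : Q_j \neq I\}$ be its support, which is nonempty because $\tilde{P}_O$ is not the identity. Conjugation by a Clifford sends Paulis to Paulis up to a phase in $\{\pm1,\pm i\}$. Hence at every stage the conjugated operator has the form (phase)$\times$(phase-free Pauli string), and we only need to track the string. The phase is absorbed into $\alpha$ at the end.

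\textbf{Phase 1 (basis change).} For each $j \in S$ we apply at most one single-qubit Clifford $W_j$ with $W_j Q_j W_j^\dagger = \pm Z$. Concretely, $W_j = H$ if $Q_j = X$, and $W_j = H S^\dagger$ (or an equivalent single gate such as $\sqrt{X}$) if $Q_j = Y$. If $Q_j = Z$ no gate is needed. To keep within the gate count, we will count $HS^\dagger$ as one single-qubit Clifford gate, since the lemma counts virtual Clifford gates and any single-qubit Clifford is one gate in $\mathcal{C}_1$. This phase uses at most $|S| \le N$ gates. The operator becomes $\beta' \prod_{j\in S} Z_j$.

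\textbf{Phase 2 (parity folding).} Pick a pivot $v \in S$, for example the smallest index. For each $j \in S\setminus\{v\}$, apply $\mathrm{CNOT}_{j\to v}$ with control $j$ and target $v$. We use the standard conjugation rule $\mathrm{CNOT}_{c\to t}\, Z_c Z_t\, \mathrm{CNOT}_{c\to t}^\dagger = Z_t$, with $Z_t$ fixed and $Z$ on other qubits untouched. Each such gate removes $Z_j$ from the string and leaves $Z_v$ in place. After $|S|-1 \le N-1$ gates the operator is $\alpha Z_v$. Let $V$ be the product of both phases. Then $V\tilde{P}_O V^\dagger = \alpha Z_v$ with at most $2N-1 \le 2N$ gates. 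If a target of $X_v$ or $Y_v$ were preferred, one more single-qubit gate would still fit the bound.

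\textbf{The phase $\alpha$.} The main point to check is $\alpha \in \{\pm1,\pm i\}$. This holds because Clifford conjugation preserves Hermiticity and preserves the set of Pauli operators with phases in $\{\pm1,\pm i\}$. If $\tilde{P}_O$ is Hermitian, as for generators of rotations and measurements, then $\alpha = \pm 1$.

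The only step needing care is the bookkeeping in Phase 1 for $Y$ factors. We must make sure each factor is handled by a single gate, so that the $2N$ count holds. Otherwise the argument is a direct composition of standard single-gate conjugation identities.
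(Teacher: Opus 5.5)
Your proof is correct, but it takes a different route from the paper's. You diagonalize first: a single-qubit Clifford on every $X$ or $Y$ factor turns the string into $\pm\prod_{j\in S}Z_j$, which you then fold onto the pivot with $\mathrm{CNOT}_{j\to v}$, ending at $\pm Z_v$. The paper never changes basis off the pivot. It splits on whether the $X$-support $\mathbf{x}$ is nonzero. If it is, the paper picks $v$ with $x_v=1$, clears the other $X$-support with $\mathrm{CNOT}_{v\to q}$ (controlled on the pivot), clears the leftover $Z$-support with $\mathrm{CZ}_{v,q}$, and finishes with at most one $S_v$ to reach $X_v$. Only for a pure $Z$-string does it use your fan-in $\mathrm{CNOT}_{q\to v}$. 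Your construction is more uniform, with no case split and only one type of two-qubit gate. On the other hand, your $2N-1$ count needs $\sqrt{X}$ (or $HS^\dagger$ counted as one gate) for $Y$ factors, whereas the paper's $(N-1)+(N-1)+1$ uses only CNOT, CZ and one $S$. The more substantive difference is what each construction does in the simulator. The paper's $V$ is made of controlled-Pauli gates plus a diagonal $S_v$. When its controls lie in $D$, it acts as the identity on $|\phi\rangle_A\otimes|0\rangle_D$ (Lemma~\ref{lem:dormant_invariance}), so the subsequent localized rotation promotes at most the single pivot qubit. Your Phase~1 instead rotates every dormant $X$/$Y$-support qubit out of $|0\rangle$. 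For $X_1X_2X_3$ on dormant qubits, for example, your $V$ sends $|000\rangle$ to $|{+}{+}{+}\rangle$, while the paper's $\mathrm{CNOT}_{1\to 2}\mathrm{CNOT}_{1\to 3}$ leaves it fixed. Both arguments prove the lemma as stated, which is purely algebraic. Only the paper's construction, however, is compatible with keeping the active dimension $k$ small.
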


    A constructive proof detailing the $\mathcal{O}(N)$ greedy localization algorithm is provided in Appendix~\ref{app:pauli_localization}.

    To integrate this localization into the frame factorization without altering the physical state, we explicitly track how $V$ distributes. First we observe the the result of Heisenberg mapping a physical operation $\exp(-i\theta P_O)$ on the frame-factored state~\ref{eq:factored_state} yields (ignoring the $(t)$ superscript):
    \begin{align}
        \exp(-i\theta P_O) |\psi\rangle &= \exp(-i\theta P_O) \gamma U_C \tilde{P} \Big( |\phi\rangle_A \otimes |0\rangle_D \Big) \nonumber \\
        &= \gamma U_C \Big( U_C^\dagger \exp(-i\theta P_O) U_C \Big) \tilde{P} \Big( |\phi\rangle_A \otimes |0\rangle_D \Big) \nonumber \\
        &= \gamma U_C \exp(-i\theta \tilde{P}_O) \tilde{P} \Big( |\phi\rangle_A \otimes |0\rangle_D \Big)
    \end{align}
    Following Pauli localization, inserting the identity $V^\dagger V = I$ between the operators factors the expression further:
    \begin{align}
        \exp(-i\theta P_O) |\psi\rangle &= \gamma U_C V^\dagger \Big( V \exp(-i\theta \tilde{P}_O) V^\dagger \Big) \Big( V \tilde{P} V^\dagger \Big) V \Big( |\phi\rangle_A \otimes |0\rangle_D \Big) \nonumber \\
        &= \gamma \Big( U_C V^\dagger \Big) \exp(-i\theta \alpha P_v) \Big( V \tilde{P} V^\dagger \Big) \Big[ V \Big( |\phi\rangle_A \otimes |0\rangle_D \Big) \Big]
    \end{align}

    This effectively absorbs the inverse transformation into the offline Clifford coordinate frame ($U_C \leftarrow U_C V^\dagger$), conjugates the Pauli frame ($\tilde{P} \leftarrow V \tilde{P} V^\dagger$), and isolates the operation $\exp(-i\theta \alpha P_v)$ to a single axis of the state vector.

    The useful case is when $V$ can be chosen so that it does not create superposition in dormant qubits. Clifft therefore prefers localization sequences that act trivially on $|0\rangle_D$ whenever possible.
    \begin{lemma}[Dormant Invariance]
        \label{lem:dormant_invariance}
        Let $V \in \mathcal{C}_N$ be a sequence of controlled-Pauli operations whose controls are restricted to dormant qubits in $D$. Then $V$ acts as the identity on the computational-zero dormant subspace:
        \begin{equation}
            V\Big(|\phi\rangle_A \otimes |0\rangle_D\Big)
            =
            |\phi\rangle_A \otimes |0\rangle_D .
        \end{equation}
    \end{lemma}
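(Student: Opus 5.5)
The plan is to prove the claim for a single controlled-Pauli gate and then extend it to the whole sequence $V = G_m \cdots G_1$ by induction on $m$. The key observation is that the hypothesis does not change from gate to gate. Each $G_j$ has its control on a dormant qubit in $D$, and the input state $|\phi\rangle_A \otimes |0\rangle_D$ is the same at every step. So once one gate is shown to fix every state of the form $|\chi\rangle_A \otimes |0\rangle_D$, the induction follows immediately: applying $G_1$ returns the same state, and the same argument then applies to $G_2$, and so on.

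For a single gate, write $G = C_c P_T$ with control qubit $c \in D$ and target Pauli $P_T$. The target may act on active or dormant qubits, or on several qubits at once. Expand it as
\begin{equation}
G = |0\rangle\langle 0|_c \otimes I + |1\rangle\langle 1|_c \otimes P_T .
\end{equation}
The step that needs care is that $P_T$ must not act on the control qubit $c$ itself. This holds for any well-formed controlled gate such as $CX$, $CZ$, or a controlled multi-qubit Pauli. Given that, the projector $|1\rangle\langle 1|_c$ commutes with $P_T$.

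Apply $G$ to $|\chi\rangle_A \otimes |0\rangle_D$. Since $c \in D$, qubit $c$ is in the state $|0\rangle$. The second term therefore vanishes, because $|1\rangle\langle 1|_c$ annihilates the state. The first term acts as the identity. Hence $G(|\chi\rangle_A \otimes |0\rangle_D) = |\chi\rangle_A \otimes |0\rangle_D$.

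Note that nothing in this argument restricts the target. It may even be a dormant qubit, which would be flipped if the control were $|1\rangle$. Because the control is always $|0\rangle$, that branch never contributes.

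The main obstacle is making precise what ``controlled-Pauli with control in $D$'' means for gates like $CZ$, which are symmetric in their two qubits. A $CZ$ with one leg in $D$ can be read as controlled on that dormant leg, so it acts trivially; the plan is to state this convention explicitly in the proof. A related point is that the statement is about the fixed virtual labeling $A, D$ at the time $V$ is applied, and no gate in $V$ changes which qubits are dormant. I would also remark that a symmetric $CZ$ with both legs active is excluded by the hypothesis, so no further cases arise.
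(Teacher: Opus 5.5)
Your proof is correct and takes the obvious route; the paper gives no written proof of this lemma, treating it as immediate from your observation that each gate equals $|0\rangle\langle 0|_c \otimes I + |1\rangle\langle 1|_c \otimes P_T$ with $c \in D$, so the $|1\rangle\langle 1|_c$ branch vanishes on $|\chi\rangle_A \otimes |0\rangle_D$ and the whole sequence fixes the state by composition. Your point about reading ``control'' in the $Z$-basis sense is the one convention worth stating explicitly: under it a $CZ$ with one dormant leg is covered, whereas a CNOT with active control and dormant target is excluded by the hypothesis, as it must be, since it would flip the dormant qubit.
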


    When this condition holds, localization changes only the coordinate and Pauli frames, not the active state vector. When it does not, the localized operation may promote a dormant virtual qubit into the active set. The next subsection summarizes how the supported active operations update the frame-factored state.

    \subsection{Active Operation Updates}
    \label{sec:active_updates}

    Once an operation has been mapped into the virtual basis and localized to a single virtual axis, its effect on the frame-factored state falls into three categories: continuous rotations, projective measurements, and conditional Pauli operations.

    \textbf{Continuous Pauli rotations.}
    A non-Clifford Pauli rotation $\exp(-i\theta P_O)$ is mapped and localized to a single-axis virtual rotation $\exp(-i\theta' P_v)$. Before acting on the active state vector, the localized rotation is commuted past the virtual Pauli frame. Since Pauli operators either commute or anticommute, this only changes the sign of the rotation angle according to a boolean parity of $\tilde{P}$.

    The resulting update depends on whether $v$ is dormant or active. If $v \in D$ and the localized rotation is diagonal on the dormant $|0\rangle_v$ state, the rotation contributes only a scalar phase to $\gamma$. If evaluating the rotation requires placing $v$ into a conjugate basis, i.e. $|+\rangle_v$, then $v$ is promoted into the active set, $A \leftarrow A \cup \{v\}$, and the active state vector expands from dimension $2^k$ to $2^{k+1}$. If $v \in A$ already, the rotation is applied directly to the corresponding tensor factor of $|\phi\rangle_A$ without changing $k$. The detailed case analysis is given in Appendix~\ref{sec:continuous_rotations}.

    \textbf{Projective Pauli measurements.}
    A projective Pauli measurement is similarly mapped and localized to a single virtual observable $M_v \in \{X_v,Z_v\}$ with projector
    \begin{equation}
        \Pi_m = \frac{1}{2}\bigl(I + (-1)^m M_v\bigr),
    \end{equation}
    where $m \in \{0,1\}$ is the physical measurement outcome. However, we do not need to localize this mapped projector. Instead, commuting it past the virtual Pauli frame extracts a parity shift:
    \begin{equation}
        \Pi_m \tilde{P}
        =
        \tilde{P}\Pi_{m \oplus p},
    \end{equation}
    where $p=1$ if $M_v$ anticommutes with $\tilde{P}$ and $p=0$ otherwise.

    If $v \in D$, this final projective measurement is resolved against a known virtual $|0\rangle_v$ state. Depending on the measurement basis, the outcome is either deterministic up to the Pauli-frame shift or uniformly random with a deterministic basis update absorbed into $U_C$ and $\tilde{P}$. In either case the active state vector is unchanged. If $v \in A$, the shifted projector is applied directly to $|\phi\rangle_A$. The measured axis then becomes disentangled from the remaining active state, can be returned to the dormant set, and the active dimension contracts as $k \leftarrow k-1$. A more explicit case-by-case description appears in Appendix~\ref{app:measurement_updates}.

    \textbf{Conditional Pauli operations.}
    Stochastic Pauli noise, feed-forward corrections, and other classically controlled Pauli operations are also handled without Pauli localization. After Heisenberg mapping, a conditional physical Pauli $E^c$ becomes a conditional virtual Pauli $\tilde{E}^c$, where the condition $c \in \{0,1\}$ is determined by either a random sample or a previous measurement result. Because the Pauli group is closed under multiplication, the update is absorbed into the virtual Pauli frame:
    \begin{equation}
        \tilde{P} \leftarrow \tilde{E}^c \tilde{P},
    \end{equation}
    with any phase accumulated into $\gamma$. These operations do not change $U_C$, do not change $A$, and do not traverse the active state vector. Details are given in Appendix~\ref{sec:conditional_paulis}.

    \subsection{Coordinate-Amplitude Decoupling}
    \label{sec:decoupling_theorem}

    The frame-factored representation separates deterministic coordinate evolution from sample-dependent amplitude updates. This separation is the property that allows Clifft to compile the Clifford geometry once and reuse it across many shots.

    \begin{theorem}
    \label{thm:decoupling}
    Under the frame-factored representation, the trajectory of the Clifford frame $U_C^{(t)}$ and the active-set geometry $A^{(t)}$ are determined by the circuit structure and localization choices. They are independent of stochastic Pauli error samples, probabilistic measurement outcomes, and the complex amplitudes of the active state vector $|\phi^{(t)}\rangle_A$.
    \end{theorem}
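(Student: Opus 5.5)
We prove the statement by induction on the timestep $t$. The inductive hypothesis is that the pair $(U_C^{(t)}, A^{(t)})$ is a deterministic function of the circuit prefix $O_0,\ldots,O_{t-1}$ and the localization choices made so far. It does not depend on the sampled Pauli errors, the measurement outcomes, the Pauli frame bits $(\mathbf{x}^{(t)},\mathbf{z}^{(t)})$, $\gamma^{(t)}$, or $|\phi^{(t)}\rangle_A$. The base case holds trivially because $U_C^{(0)}=I$ and $A^{(0)}=\emptyset$. For the inductive step we go through the operation classes of Section~\ref{sec:heisenberg}. For each one we check that the update rule of Section~\ref{sec:active_updates} computes the new $U_C$ and $A$ only from the old $U_C$, the old $A$, and the static description of the operation.

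\textbf{Passive Clifford gates} set $U_C\leftarrow CU_C$ and leave $A$ unchanged, so they are deterministic.

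\textbf{Conditional Pauli operations} (noise and feed-forward) change only $\tilde{P}$ and $\gamma$. By Section~\ref{sec:active_updates} they leave $U_C$ and $A$ untouched. Their mapped generator $\tilde{E}=U_C^\dagger E U_C$ depends only on $U_C$, which is deterministic by induction. The sampled bit $c$ only decides whether $\tilde{E}$ is multiplied into $\tilde{P}$, and $\tilde{P}$ is not part of the claim.

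\textbf{Non-Clifford rotations.} The virtual generator $\tilde{P}_O$ is fixed by Definition~\ref{def:heisenberg_map} and the inductive hypothesis. The localizing Clifford $V$ from Lemma~\ref{lem:pauli_localization} is built from $\tilde{P}_O$ and $A$ alone. This requires that the greedy algorithm, including its preference for the dormant-controlled sequences of Lemma~\ref{lem:dormant_invariance}, reads only the support pattern of $\tilde{P}_O$ and the active/dormant labelling. It must not read any amplitude data. We treat this as the content of ``localization choices''. It follows that $U_C\leftarrow U_CV^\dagger$ is deterministic.

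The promotion decision for $v$ must also be deterministic, and this has a subtlety. Commuting past $\tilde{P}$ flips the angle sign by a shot-dependent parity. We must show that this flip cannot change whether $v$ is promoted. A sign change $\theta'\to-\theta'$ keeps $P_v$ the same axis, so whether $P_v$ is diagonal on $|0\rangle_v$ depends only on $P_v\in\{X_v,Y_v,Z_v\}$, never on the sign. The frame bit $x_v$ could turn the dormant $|0\rangle_v$ into $|1\rangle_v$, but a $Z_v$ rotation is still diagonal there, so only $\gamma$ is affected. Hence $A$ updates deterministically.

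\textbf{Measurements.} The localized observable $M_v$ and its axis come from $U_C$, $A$, and $V$, so they are deterministic. The outcome $m$ and the parity $p$ enter only through $\Pi_{m\oplus p}$.
\begin{itemize}
\item If $v\in A$, the post-measurement state has the axis $v$ in a definite eigenstate. Any residual correction that maps it back to $|0\rangle_v$ is a Pauli $X_v^{m\oplus p}$. That correction is absorbed into $\tilde{P}$, so $v$ returns to $D$ regardless of $m$. The basis change needed to turn an $X_v$ eigenstate into a $Z$-basis state is a fixed $H_v$ absorbed into $U_C$, independent of $m$.
\item If $v\in D$, the deterministic-versus-random classification depends only on whether $M_v$ is $Z_v$ or $X_v$. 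For the random case, the basis update absorbed into $U_C$ must again be a fixed Clifford, with the outcome-dependent sign sent into $\tilde{P}$.
\end{itemize}

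The main obstacle is the random-outcome and active-collapse bookkeeping. We must verify, from the case analysis deferred to Appendix~\ref{app:measurement_updates}, that every outcome-dependent or frame-dependent piece of the update can be written as a Pauli factored into $\tilde{P}$ (or a scalar in $\gamma$). No part of it may be written into $U_C$ or used to decide membership in $A$. Our first step is to show a general fact: any Clifford that would depend on $m$ can be rewritten as a fixed Clifford times $X_v^{m}$, using $\Pi_1=X_v\Pi_0X_v$. With this in hand, all operation types preserve the inductive hypothesis, and the theorem follows.
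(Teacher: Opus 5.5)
Your proof is correct and follows the same route as the paper's. Both arguments split the cases into passive Cliffords, conditional Paulis, rotations, and measurements, and show that the updates to $U_C$ and $A$ depend only on the mapped generator and the active/dormant layout. You make this an explicit induction on $t$ and add checks the paper only asserts: the frame-induced sign flip $\theta'\to-\theta'$ cannot change the promotion decision, and outcome-dependent restoration factors go into $\tilde{P}$ rather than $U_C$.

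One loose spot is the closing ``general fact,'' which is false for arbitrary $m$-dependent Cliffords (e.g.\ $C_0=I$, $C_1=S_v$ is not $C X_v^{m}$ for any fixed $C$). You only need it for the specific maps $X_v^{m\oplus p}$ and $H_v X_v^{m\oplus p}$ that carry $|0\rangle_v$ to the post-measurement eigenstate, and your bullets already handle those.
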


    \begin{proof}
    By Definition~\ref{def:frame_decomposed}, all physical Clifford gates are absorbed into the Clifford frame $U_C^{(t)}$. By Definition~\ref{def:heisenberg_map}, each active operation has a virtual Pauli generator determined algebraically by the current Clifford frame and the physical circuit operation. The localization sequence $V$ in Lemma~\ref{lem:pauli_localization} is chosen from this virtual generator and the current active/dormant layout, not from the numerical amplitudes of $|\phi\rangle_A$.

    Continuous Pauli rotations can expand the active set only when their localized virtual axis requires promotion from the dormant subspace; this condition is fixed by the mapped generator and localization choice. Projective measurements can contract the active set only when their localized virtual axis lies in $A$; the measured eigenvalue affects the branch and normalization but not which axis is measured or demoted. Conditional Pauli operations, including stochastic noise and feed-forward corrections, multiply into the virtual Pauli frame and leave $U_C$ and $A$ unchanged. Therefore the evolution of $U_C^{(t)}$ and $A^{(t)}$, and hence the bound $k_{\max}$, can be determined before sampling.
    \end{proof}

    This structural guarantee changes the execution model for exact simulation. All Clifford coordinate updates, Heisenberg mappings, and active-space planning can be resolved ahead of time. Runtime sampling then performs only Pauli-frame updates, active-array operations, and measurement/noise sampling over a fixed compiled schedule.

    \subsection{Asymptotic Complexity}
    \label{sec:complexity}

    The frame-factored representation shifts the dominant exponential cost of simulation from the total number of physical qubits $N$ to the peak active virtual dimension $k_{\max}$. Let $C$ denote the number of Clifford operations, $M$ the number of measurements, $T$ the number of non-Clifford rotations, and $E$ the number of stochastic Pauli error mechanisms.

    \textbf{Compile-time cost.}
    All deterministic Clifford coordinate transformations and Pauli localization steps are resolved ahead of time. Absorbing the $C$ physical Clifford operations into the coordinate frame via tableau updates scales as $\mathcal{O}(CN)$\footnote{The implementation uses packed bit operations for many tableau, Pauli-frame, and record updates, so some linear factors in $N$ are reduced by machine-word parallelism in practice. Throughout, we write these costs as $\mathcal{O}(N)$ to emphasize the algorithmic dependence on the full system size rather than word-level constants.}. Mapping the $E$ error mechanisms into the virtual basis requires $\mathcal{O}(EN)$ work. Measurements and non-Clifford rotations also undergo Pauli localization; synthesizing and absorbing the resulting virtual Clifford transformations gives a worst-case cost of $\mathcal{O}((M+T)N^2)$. The total offline cost is therefore bounded by
    \begin{equation}
        \mathcal{O}\bigl(CN + EN + (M+T)N^2\bigr),
    \end{equation}
    and is incurred once per circuit.

    \textbf{Sample-time cost.}
    By Theorem~\ref{thm:decoupling}, the Clifford frame and active-set schedule are fixed before sampling. Per-shot execution is therefore independent of the number of passive Clifford gates $C$ and consists of two components.

    Discrete tracking and sampling includes Pauli-frame updates, classical control, detector and post-selection logic, and stochastic error sampling. These operations act on bit representations over $N$ qubits and therefore contribute
    \begin{equation}
        \mathcal{O}\bigl((T + M + E)N\bigr)
    \end{equation}
    in the worst case. In practice, this bookkeeping is implemented using packed bit operations and is not the dominant runtime cost in the near-Clifford regimes studied here. Sparse or gap-based noise sampling can further reduce the effective dependence on $E$ when faults are rare.

    Dense active-subspace evolution includes non-Clifford rotations and active measurements that traverse the active state vector. This contributes
    \begin{equation}
        \mathcal{O}\bigl((T + M_{\mathrm{active}})2^{k_{\max}}\bigr),
    \end{equation}
    where $M_{\mathrm{active}}$ is the number of measurements that act on the active subspace.

    The total worst-case per-shot cost is therefore
    \begin{equation}
        \label{eq:per_shot_cost}
        \mathcal{O}\bigl((T + M + E)N + (T + M_{\mathrm{active}})2^{k_{\max}}\bigr).
    \end{equation}
    In the target near-Clifford regimes, runtime is typically dominated by active-array traversals of size $2^{k_{\max}}$, while the remaining dependence on the full physical qubit count $N$ is confined to lightweight frame, error, and record updates.

    \section{Clifft Compilation and Execution}
    \label{sec:architecture}

    Section~\ref{sec:theory} showed that deterministic Clifford-coordinate evolution can be separated from runtime amplitude updates. Clifft turns this separation into a compile-once, sample-many execution model. The compiler resolves Clifford rewrites, virtual-coordinate tracking, Pauli localization, and active-space planning ahead of time. Each shot then executes a fixed schedule consisting only of Pauli-frame updates, stochastic sampling, classical control, and localized operations on the active state vector.

    Figure~\ref{fig:clifft_architecture} summarizes the pipeline. The input circuit is first lowered into the Heisenberg Intermediate Representation (HIR), which absorbs Clifford gates into the offline frame and maps the generators of active operations into the virtual basis. The compiler then optimizes this representation, applies Pauli localization, and emits bytecode for the Schr\"odinger Virtual Machine (SVM). The SVM executes this bytecode over the factored runtime state to generate samples efficiently.

    The remainder of this section describes the compiler pipeline, the SVM execution model, and the validation strategy used to test the implementation. For a full listing of supported instructions, bytecode operations, and compiler optimizations, see the documentation and codebase \cite{clifftDocsCode2026}.

    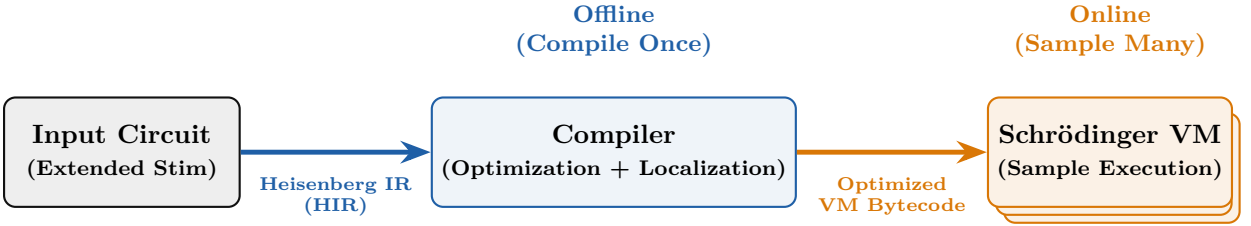
\begin{figure*}[t]
    \centering
    \begin{tikzpicture}[
            >=Stealth,
            font=\sffamily,
            node distance=2.5cm,
            simple-io/.style={rectangle, draw=UFBlack, fill=UFGray!10, thick, minimum width=3.1cm, minimum height=1.45cm, align=center, rounded corners=1ex, font=\normalsize\bfseries},
            simple-compiler/.style={rectangle, draw=UFBlue, fill=UFBlue!7, thick, minimum width=3.1cm, minimum height=1.45cm, align=center, rounded corners=1ex, font=\normalsize\bfseries},
            simple-vm/.style={rectangle, draw=UFOrange, fill=UFOrange!10, thick, minimum width=3.1cm, minimum height=1.45cm, align=center, rounded corners=1ex, font=\normalsize\bfseries},
            arr-hir/.style={->, line width=0.65mm, draw=UFBlue},
            arr-vm/.style={->, line width=0.65mm, draw=UFOrange},
            ann/.style={font=\scriptsize\bfseries, align=center},
            phase-label/.style={font=\small\bfseries, align=center}
        ]

        \node[simple-io] (input) {Input Circuit\\ {\fontsize{8}{9}\selectfont\bfseries (Extended Stim)}};
        \node[simple-compiler, right=of input] (compiler) {Compiler \\ {\fontsize{8}{9}\selectfont\bfseries (Optimization + Localization)}};

        \node[simple-vm, right=of compiler, xshift=6pt, yshift=-6pt] (svm_bg2) {};
        \node[simple-vm, right=of compiler, xshift=3pt, yshift=-3pt] (svm_bg1) {};
        \node[simple-vm, right=of compiler] (svm) {Schr\"odinger VM \\ {\fontsize{8}{9}\selectfont\bfseries (Sample Execution)}};

        \node[phase-label, text=UFBlue, above=0.4cm of compiler] {Offline\\(Compile Once)};
        \node[phase-label, text=UFOrange, above=0.4cm of svm] {Online\\(Sample Many)};

        \draw[arr-hir] (input.east) -- node[below=5pt, ann, text=UFBlue] {Heisenberg IR\\ (HIR)} (compiler.west);
        \draw[arr-vm] (compiler.east) -- node[below=5pt, ann, text=UFOrange] {Optimized\\VM Bytecode} (svm.west);

    \end{tikzpicture}
    \caption{The high-level Clifft execution pipeline. The compiler resolves Clifford-coordinate evolution, HIR optimization, Pauli localization, and active-space planning once. The SVM then samples the compiled bytecode many times over the frame-factored runtime state.}
    \label{fig:clifft_architecture}
    \end{figure*}

    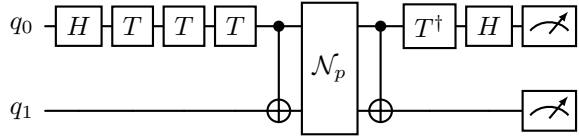
\begin{figure}[t]
    \centering
    \begin{quantikz}[column sep=4pt]
    \lstick{$q_0$} & \gate{H} & \gate{T} & \gate{T} & \gate{T} & \ctrl{1} & \gate[2]{\mathcal{N}_p} & \ctrl{1} & \gate{T^\dagger} & \gate{H} & \meter{} \\
    \lstick{$q_1$} &          &          &          &          & \targ{}  &                          & \targ{}  &                  &          & \meter{}
    \end{quantikz}
    \caption{Example mirror circuit used to illustrate Clifft's multi-level lowering. The unitary block $\mathrm{CX}\cdot T_0^3\cdot H_0$ and its simplified inverse sandwich a depolarizing channel $\mathcal{N}_p$ at strength $p$. The initial gates are intentionally written in an unsimplified form to demonstrate compiler optimization. The same example can be explored interactively in the \href{https://unitaryfoundation.github.io/clifft/playground/?code=MQAgsglgTlD2UgMbUQVwgFwFAAkQAYsAVA40kwgYQA0CQBGLAEQFEAFAeQBkBBAJQCSALRb0AFPgB0+fPQCUdRjUXEA+kx4BxUnkJhFQA}{\faPlay\ Clifft playground}.}
    \label{fig:mirror_demo}
    \end{figure}

    \subsection{Compilation Pipeline}
    \label{sec:compilation_pipeline}

    The ahead-of-time compiler translates a physical circuit into optimized SVM bytecode in four stages. The first two stages construct and optimize the Heisenberg Intermediate Representation (HIR), while the last two specialize that representation to the frame-factored runtime state by applying Pauli localization and bytecode optimization.

    We illustrate these stages using the mirror circuit in Figure~\ref{fig:mirror_demo}. The lowering shown in Figure~\ref{fig:pipeline_demo} displays the optimized artifact at each level. The same example can be inspected interactively in the playground linked from Figure~\ref{fig:mirror_demo}, including the unoptimized forms omitted here.

    \begin{figure*}[t]
    \centering
    \small
    \begin{minipage}[t]{0.29\textwidth}
    \centering\textbf{Stim-extended input}
    \begin{verbatim}
    H 0
    T 0
    T 0
    T 0
    CX 0 1
    DEPOLARIZE1(0.001) 0 1
    CX 0 1
    T_DAG 0
    H 0
    M 0 1
    \end{verbatim}
    \end{minipage}\hfill
    $\xrightarrow{\text{\scriptsize HIR}}$\hfill
    \begin{minipage}[t]{0.25\textwidth}
    \centering\textbf{Optimized HIR}
    \begin{verbatim}
    T     +X0
    NOISE site=0
    NOISE site=1
    MEAS  +Z1 -> rec[1]
    T_DAG +X0
    MEAS  +Y0 -> rec[0]
    \end{verbatim}
    \end{minipage}\hfill
    $\xrightarrow{\text{\scriptsize SVM}}$\hfill
    \begin{minipage}[t]{0.31\textwidth}
    \centering\textbf{Optimized bytecode}
    \begin{verbatim}
    OP_FRAME_H 0
    OP_EXPAND_T 0
    OP_NOISE_BLOCK sites=[0..2)
    OP_MEAS_DORMANT_STATIC 1
        -> rec[1]
    OP_ARRAY_T_DAG 0
    OP_ARRAY_S 0
    OP_MEAS_ACTIVE_INTERFERE 0
        -> rec[0]
    \end{verbatim}
    \end{minipage}
    \caption{A compact view of Clifft's multi-level lowering for the example in Figure~\ref{fig:mirror_demo}. HIR exposes active operations in the virtual basis after Clifford absorption and algebraic optimization. SVM bytecode then specializes the optimized HIR to the frame-factored runtime state, exposing frame updates, active-state expansion, active-array operations, dormant and active measurements, and coalesced noise sampling.}
    \label{fig:pipeline_demo}
    \end{figure*}

    \textbf{1. Front-End: Heisenberg mapping.}
    The front-end accepts circuits in the Stim format, extended with non-Clifford operations similarly to Tsim, including \texttt{T}, arbitrary-angle Pauli rotations such as \texttt{R\_X}, and single-qubit rotations such as \texttt{U3}\cite{haenelTsimFastUniversal2026}. Nearly all of Stim's existing noise channels, mid-circuit measurements, detectors, observables, and repeat blocks are supported without modification. Clifft integrates Stim's optimized C++ tableau implementation directly and leverages its inverse tableau tracking to absorb physical Clifford gates into the coordinate frame $U_C$ while mapping active operations into the virtual basis via the Heisenberg mapping of Definition~\ref{def:heisenberg_map}.

    The result is the Heisenberg Intermediate Representation (HIR), an equivalent representation of the original circuit expressed in the virtual basis. HIR nodes correspond to the active circuit operations described in Section~\ref{sec:heisenberg}: non-Clifford Pauli rotations, projective Pauli measurements, stochastic Pauli noise, and classically controlled Pauli operations. Passive Clifford gates do not appear explicitly in HIR; their effect survives through the transformed virtual generators $\tilde{P}_O$ of downstream active operations.

    \textbf{2. Middle-End: HIR optimization.}
    Once active operations are expressed in a common virtual basis, HIR provides a compact algebraic optimization layer. Each Pauli generator is stored as a binary symplectic bitstring, so commutation tests reduce to bitwise parity checks rather than explicit operator algebra.

    Clifft currently applies two HIR optimization passes. The first is a peephole pass that reorders commuting operations, fuses adjacent phases, cancels inverses, and absorbs any resulting Clifford action back into the offline frame. The second reorders commuting measurements and non-Clifford operations to push non-Cliffords later and pull measurements earlier when possible. This reduces the peak active dimension $k_{\max}$, or more generally the total time spent at larger active dimensions.

    In Figure~\ref{fig:pipeline_demo}, these passes combine three $T$ rotations into a single non-Clifford rotation, absorb the residual Clifford action into the frame, and update downstream virtual generators accordingly. The optimizer also moves the \texttt{MEAS +Z1} operation before the later $T^\dagger$ rotation, illustrating the same scheduling principle used more generally to reduce active-state work.

    These HIR optimizations are not specific to the SVM. They act on the Heisenberg representation of the circuit and could, in principle, target other compilation or simulation back ends.

    \textbf{3. Back-End: Pauli localization.}
    The back-end lowers optimized HIR into a bytecode specialized to the SVM's frame-factored state. As it processes each HIR operation, it applies Pauli localization as described in Lemma~\ref{lem:pauli_localization}. This converts mapped multi-qubit virtual generators into localized operations on individual virtual axes and determines whether those axes are active or dormant at runtime.

    The bytecode distinguishes the resulting frame updates, active-state expansion, active-array operations, dormant measurements, active measurements, noise sampling, and classical record updates. In Figure~\ref{fig:pipeline_demo}, the optimized bytecode contains both \texttt{OP\_MEAS\_DORMANT\_STATIC}, which resolves a measurement against a known dormant virtual qubit, and \texttt{OP\_MEAS\_ACTIVE\_INTERFERE}, which acts on the active state vector and can collapse an active axis. These distinctions are precisely the runtime counterpart of the active/dormant decomposition in Definition~\ref{def:frame_decomposed}.

    \textbf{4. Bytecode optimization.}
    After initial lowering, a final pass manager rewrites SVM bytecode to reduce instruction dispatch overhead and minimize full traversals of the active state vector. These optimizations target the low-level runtime costs identified in Section~\ref{sec:complexity}: active-array operations scale as $2^k$, while noise sampling and classical bookkeeping should avoid work proportional to the full number of inactive events.

    In Figure~\ref{fig:pipeline_demo}, expansion of the active state and the first $T$ update are fused into a single \texttt{OP\_EXPAND\_T} instruction, avoiding a separate active-array traversal. The two adjacent noise instructions are also coalesced into a single \texttt{OP\_NOISE\_BLOCK}, reducing dispatch overhead and allowing the SVM's noise sampler to process the block as one unit. Unlike HIR optimizations, which simplify the quantum algebra of the circuit, bytecode optimizations target the concrete execution costs of the SVM.

    \subsection{Sampling with the Schr\"odinger Virtual Machine}
    \label{sec:svm}

    The Schr\"odinger Virtual Machine executes the compiled bytecode over the frame-factored runtime state. Each program instance allocates an active array of size $2^{k_{\max}}$, bit vectors for the virtual Pauli frame, a complex scalar for global phase and normalization, measurement record buffers, and state for the random number generators and noise sampler. Each shot reuses this storage: the SVM resets the active state and Pauli frame, executes the bytecode, and records the requested outputs.

    This fixed-layout execution model avoids per-shot allocation because the compiler has already determined $k_{\max}$ and the active-set schedule. It also makes sampling embarrassingly parallel across shots or shot batches. The SVM implementation is organized around the main runtime bottlenecks implied by Section~\ref{sec:complexity}.

    \textbf{Active-state vector bottleneck.}
    The dominant cost of non-Clifford execution is the traversal of the active state vector. The SVM therefore treats the active array as its primary optimization target. Single-shot complex arithmetic over the active subspace is implemented using Single-Instruction Multiple-Data (SIMD) kernels, and the bytecode optimizer reduces the number of full-array traversals by fusing compatible operations before execution. This use of SIMD differs from Stim's pure-Clifford strategy: Stim can vectorize Pauli-frame bit operations across many shots, whereas Clifft uses SIMD primarily within the dense active array of a single shot.

    When the active state is small enough to fit in cache, execution remains single-threaded to preserve locality and avoid synchronization overhead. Once $k$ becomes large enough for memory bandwidth and vector throughput to dominate, the SVM uses OpenMP to parallelize active-array kernels. In the current implementation this transition occurs at $k>18$.

    \textbf{Sparse error bottleneck.}
    In the low-noise regime ($p \ll 1$), most stochastic error locations do not trigger. Naively drawing a Bernoulli random variable at every site therefore wastes work on non-events. Similar in spirit to Stim's low-entropy noise-sampling optimization~\cite{gidneyStimFastStabilizer2021}, Clifft samples from a cumulative hazard array and jumps directly to the next realized fault. This supports heterogeneous noise probabilities while making the expected cost of stochastic noise handling scale with the number of realized faults rather than the total number of noise sites.

    \textbf{Rare-event sampling overhead.}
    For low logical error rates, brute-force Monte Carlo requires a number of shots scaling as $\mathcal{O}(1/p_L)$ to resolve a logical error rate $p_L$. This is a statistical bottleneck of rare-event estimation, not a limitation introduced by Clifft's frame-factored architecture. However, because such rare-event estimates are central to fault-tolerant simulation, Clifft's architecture enables native support for the stratified importance-sampling approach used by Tuloup and Ayral~\cite{tuloupComputingLogicalError2026a}. Shot batches are conditioned on a fixed number of realized faults and weighted by the exact probability of that fault count. For heterogeneous error probabilities, Clifft computes the corresponding Poisson-binomial distribution and marks noise sites as forced to trigger or not trigger for each shot. This allows the same SVM bytecode to generate importance-weighted samples with minimal additional runtime overhead.

    \textbf{Post-selection overhead.}
    Classical operations are evaluated inline during bytecode execution. Detector recording, feed-forward Pauli corrections, post-selection checks, and expectation-value probes operate directly on the frame-factored runtime state. For high-discard protocols such as magic state cultivation, the compiler can insert \texttt{OP\_POSTSELECT} instructions that terminate rejected shots immediately. This avoids executing later bytecode, including potentially expensive active-array operations, for shots that no longer contribute to the kept ensemble.

    \subsection{ Validation}
    \label{sec:validation}

    We validate Clifft using complementary strategies rather than relying on random circuit fuzzing alone. Deep random circuits quickly wash structured amplitude information into effectively featureless output statistics, making them a poor probe for subtle phase, interference, and frame-tracking errors. Instead, we combine interface contract tests, structured circuit families with known behavior, and cross-validation against independent simulators.

    At the interface level, Clifft relies on Stim for tableau operations and inverse Pauli rewinding. We therefore rely on Stim's own extensive testing for those primitives, and add contract tests that codify the specific assumptions Clifft makes about Stim's tableau and Heisenberg-rewinding behavior. These tests ensure that future changes in Stim internals do not silently alter the frame dynamics used by the Clifft compiler.

    We then validate the core simulation logic using structured circuits. Mirror circuits of the form $U U^\dagger = I$ are especially useful because they combine Clifford layers, non-Clifford gates, and entangling structure while retaining a simple expected outcome: noiseless simulations must return deterministically to the initial computational-basis state. These tests exercise active-state expansion and contraction, phase accumulation, frame tracking, and numerical reversibility. We also use topological stress tests, including fan-out patterns and commutation-heavy constructions, to check that optimizer rewrites and bytecode fusions preserve circuit behavior.

    Finally, we cross-check Clifft against external simulators in three regimes. First, for sufficiently small circuits, we explicitly expand Clifft's factored state $\ket{\psi}=\gamma U_C \tilde{P}\ket{\phi}_A$ into a dense computational-basis state vector and compare it against the Qiskit Aer state-vector simulator~\cite{qiskit2024}. Second, on Clifford-compatible noisy QEC circuits, we compare detector and logical-observable statistics against Stim over millions of shots, checking that observed marginals agree within binomial shot-noise bounds. Third, we inject deterministic Pauli faults into topological QEC circuits and verify that Clifft produces the same logical detector trajectories as Stim, confirming that the compiled virtual-frame dynamics reproduce the correct physical error propagation.

    \section{Benchmarks and Results}
    \label{sec:results}

    In this section, we evaluate Clifft across a range of circuit regimes to characterize its performance and scaling behavior. We begin with controlled benchmarks spanning pure-Clifford, near-Clifford, and dense non-Clifford circuits. We then turn to a study of the Magic State Cultivation (MSC) protocol, where Clifft's design is most relevant. Building on prior work that was limited to the inject and cultivation stages, we reproduce and extend these results and, for the first time, perform exact end-to-end simulations including the escape stage. This enables a direct comparison between the true $T$-gate circuit and its $S$-proxy approximation at full scale, allowing us to quantify the $T/S$ gap across both intermediate and end-to-end regimes.

    Circuits, analysis scripts, raw collected data, and the exact circuit files are reproducible from the companion \texttt{clifft-paper} repository~\cite{clifftPaperRepo}.

    \subsection{Benchmarks}
    \label{sec:results_qec}

    We compare Clifft against Stim~\cite{gidneyStimFastStabilizer2021} and Tsim~\cite{haenelTsimFastUniversal2026} on three families of fault-tolerant circuits drawn from the literature: pure-Clifford surface-code memory experiments, near-Clifford magic-state preparation circuits, and coherent-noise variants of surface-code memory in which every Clifford gate is followed by a small unitary $R_Z(\theta)$ rotation (citations in Table~\ref{tab:benchmark_throughput} caption). The first family probes graceful degradation in the pure-Clifford limit ($k_{\max}=0$); the second exercises the regime Clifft was designed for ($k_{\max}$ small but nonzero); the third pushes $k_{\max}$ a bit beyond the small limit to see how well Clifft can sustain performance as the active subspace grows.

    Table~\ref{tab:benchmark_throughput} reports sample-time throughput as shots per second. Each circuit is annotated with the peak active dimension $k_{\max}$, the physical qubit count $N$, the unrolled total operation count, and the non-Clifford subset of that count. The \faPlay{} icon next to each circuit name links to the corresponding circuit loaded into Clifft's interactive playground.

    The benchmark circuits use fixed representative noise configurations. The pure-Clifford surface-code circuit is generated with Stim's rotated memory experiment under uniform circuit-level depolarizing noise at $p=10^{-3}$. The coherent-noise variants use the same base circuit-level noise model and add an $R_Z(\theta)$ over-rotation with $\theta=0.02$ co-located with each depolarizing channel. The cultivation circuits are adapted from the Gidney et al.\cite{gidneyMagicStateCultivation2024} MSC templates with stochastic error probabilities scaled to $p=10^{-3}$. Note these are not the end-to-end circuits, but only through the cultivation stage. The distillation circuit is based on one from Refs.~\cite{salesrodriguezExperimentalDemonstrationLogical2025,haenelTsimFastUniversal2026}, with preparation noise $0.05$ and circuit-level depolarizing noise $0.01$. Exact circuit files and generation scripts are included in the companion repository~\cite{clifftPaperRepo}.

    All Clifft and Stim throughput numbers were collected on a single AWS \texttt{c8i.8xlarge} EC2 instance, using 16 physical CPU cores, and Tsim numbers on a single NVIDIA GH200 GPU. Even though these are different compute platforms, they have comparable hourly cloud costs. The reported values are medians over three repetitions. Throughput numbers for Clifft and Stim amortize compile-time overhead, which is milliseconds or less for these circuits. Tsim throughput numbers exclude compilation time and a warmup run. We also note that Tsim throughput can get much faster with smaller physical noise rates $p$; the values in Table~\ref{tab:benchmark_throughput} should therefore be read as fixed-noise benchmark points rather than noise-independent performance.

    \begin{table*}[t]
    \centering
    {\small
    \setlength{\tabcolsep}{4pt}
    \begin{tabular}{l r r r r | r r r}
        \toprule
        \textbf{Circuit} & \textbf{Qubits} & \textbf{Ops} & \textbf{Non-Cliff} & $\mathbf{k_{\max}}$ & \textbf{Clifft} & \textbf{Stim} & \textbf{Tsim} \\
        \midrule
        \multicolumn{8}{l}{\textit{Pure Clifford}\textsuperscript{b}} \\
        Surface code $d=7, r=7$~\clifftplay{surface_d7_r7.stim}    & 118 & 4667 & 0    & 0  & 2.2M            & \textbf{20.1M} & 314.7k \\
        \midrule
        \multicolumn{8}{l}{\textit{Near-Clifford: Magic State}} \\
        Cultivation $d=3$\textsuperscript{c}~\clifftplay{cultivation_d3.stim} & 15 & 676  & 29   & 4  & \textbf{10.4M}  & --             & 27.9k \\
        Cultivation $d=5$\textsuperscript{c}~\clifftplay{cultivation_d5.stim} & 42 & 4379 & 91   & 10 & \textbf{314.4k} & --             & DNC\textsuperscript{a} \\
        Distillation\textsuperscript{d}~\clifftplay{distillation.stim}        & 85 & 1163 & 10   & 5  & \textbf{1.5M}   & --             & \textbf{1.5M} \\
        \midrule
        \multicolumn{8}{l}{\textit{Near-Clifford: Coherent Noise}\textsuperscript{e}} \\
        Surface code $d=3, r=1$~\clifftplay{coherent_d3_r1.stim}              & 26 & 173  & 65   & 5  & \textbf{19.4M}  & --             & 14.4M \\
        Surface code $d=3, r=3$~\clifftplay{coherent_d3_r3.stim}              & 26 & 415  & 195  & 8  & \textbf{1.7M}   & --             & DNC\textsuperscript{a} \\
        Surface code $d=5, r=1$~\clifftplay{coherent_d5_r1.stim}              & 64 & 533  & 209  & 13 & 133.1k          & --             & \textbf{570.9k} \\
        Surface code $d=5, r=5$~\clifftplay{coherent_d5_r5.stim}              & 64 & 2073 & 1045 & 24 & \textbf{0.7}    & --             & DNC\textsuperscript{a} \\
        \bottomrule
    \end{tabular}
    }
    \caption{Sample-time throughput (effective shots/s; median of 3 repetitions) across QEC-relevant circuits. \textbf{Bold} marks the per-row winner. Ops is the total number of operations in the unrolled circuit, and Non-Cliff is the number of non-Clifford operations ($T$ gates or Pauli rotations). $k_{\max}$ is the peak active virtual dimension. The \faPlay{} icon next to each circuit name links to the corresponding circuit loaded into Clifft's interactive playground. Stim results are ``--'' for the non-Clifford circuits it does not support.\\
\textsuperscript{a}DNC: did not compile within a 2 minute time budget.
\textsuperscript{b}Generated via Stim~\cite{gidneyStimFastStabilizer2021}.
\textsuperscript{c}Adapted from~\cite{gidneyMagicStateCultivation2024}.
\textsuperscript{d}Adapted from~\cite{salesrodriguezExperimentalDemonstrationLogical2025,haenelTsimFastUniversal2026}.
\textsuperscript{e}Adapted from~\cite{tuloupComputingLogicalError2026a}.}
    \label{tab:benchmark_throughput}
    \end{table*}

    Clifft is competitive across regimes and strongest in the near-Clifford regime. First, in the pure-Clifford limit, Stim retains a roughly $10\times$ throughput advantage. This is expected: Stim was purpose-built for Clifford simulation and benefits from SIMD parallelism across many shots. Clifft pays an overhead to maintain its frame-factored representation, even when $k_{\max}=0$ and no active state vector is allocated.

    Second, on the near-Clifford magic-state circuits through the cultivation stage, Clifft is substantially faster than Tsim on the tested fixed-noise instances, including a roughly $370\times$ throughput advantage on the $d{=}3$ cultivation benchmark in Table~\ref{tab:benchmark_throughput}\footnote{Specialized pipelines built on top of Tsim can run much faster: Ref.~\cite{wanSimulatingMagicState2026a} reports ${\sim}4{\times}10^6$ shots/s on $d{=}3$ cultivation at $p=5{\times}10^{-4}$.}. Tsim's stabilizer-rank-based strategy is sensitive to the number of non-Clifford gates, whereas Clifft's cost is governed by the peak active dimension. For the cultivation circuits, the non-Clifford structure remains localized, with $k_{\max}=4$ at $d{=}3$ and $k_{\max}=10$ at $d{=}5$. For the $d{=}5$ cultivation circuit, Tsim did not compile within a 2 minute time budget, while Clifft sustains 314k shots/s. In contrast, the distillation circuit has only 10 $T$ gates and $k_{\max}=5$, yielding similar sampling throughput for Clifft and Tsim.

    Third, the coherent-noise sweep makes the $k_{\max}$ dependence concrete. As $k_{\max}$ grows from 5 to 24, Clifft's throughput decreases by several orders of magnitude as the active state vector grows beyond CPU cache size and approaches the dense-state-vector regime. Tsim performs well with one round of measurements, but did not compile within the time budget for the larger repeated-round coherent-noise instances.

    To further explore Clifft's performance as $k_{\max}$ grows, we consider the limit where $k_{\max}=N$. In this limit, Clifft acts as a dense state vector simulator, but leverages the frame-factored representation for efficient evaluation of gates in this regime. Concretely, we sample random noiseless Quantum Volume circuits at various depths $D=N$ and compare Clifft against four general-purpose CPU simulators: Qiskit-Aer~\cite{qiskit2024}, Qulacs\cite{suzukiQulacsFastVersatile2021}, qsim~\cite{quantum_ai_team_and_collaborators_2025_4067237}, and Qrack~\cite{stranoExactApproximateSimulation2023}. This benchmark and simulators were chosen based on a recent study by van Niekerk~et~al.\cite{niekerkComparisonHPCbasedQuantum2024}. These circuits have no near-Clifford structure to exploit, nor any measurement to counteract growth in $k$. They stress-test Clifft's worst-case behavior rather than its design sweet spot. All tests were run on the same 16-core CPU node as above, with default settings for each simulator.

    \begin{figure*}[t]
    \centering
    \includegraphics[width=\textwidth]{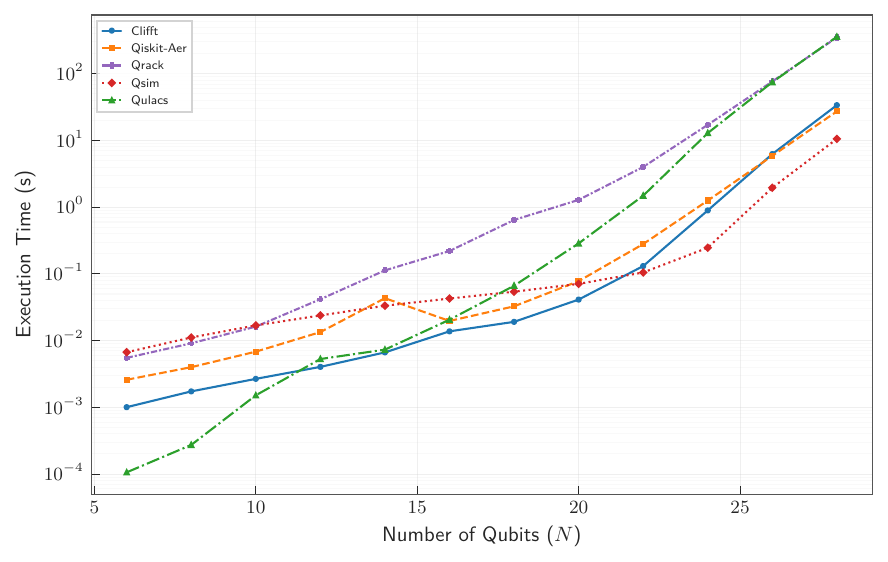}
    \caption{Single-shot execution time vs.\ qubit count $N$ for random Quantum Volume circuits at depth $D=N$, run on a 16-core CPU. This dense-limit benchmark corresponds to the worst case $k_{\max}=N$, where Clifft no longer benefits from near-Clifford structure. Even in this regime, Clifft tracks the leading dense state vector simulators within a small constant factor up to $N=28$; its near-Clifford speedups are shown separately in Table~\ref{tab:benchmark_throughput}.}
    \label{fig:qv_scaling}
    \end{figure*}

    From Fig.~\ref{fig:qv_scaling}, we see Qulacs is fastest at very small circuits through $N=10$, Clifft is fastest from $N=12$ to $N=20$, and qsim takes the lead at $N=22$. Note that qsim uses single-precision floating point, requiring roughly half the memory of Clifft at the same $N$. Clifft remains single-threaded through $N=18$, after which it uses OpenMP to parallelize active-state-vector updates.

    Taken together, these benchmarks show that Clifft spans the regimes targeted by its design. Stim remains substantially faster in the pure-Clifford limit, while Clifft provides its clearest advantage on low-magic fault-tolerant circuits where $N$ is large but $k_{\max}$ remains modest. In the opposite dense limit, where $k_{\max}=N$, Clifft tracks leading CPU state-vector simulators within a small constant factor on the Quantum Volume benchmark. This combination suggests that Clifft is especially useful for non-Clifford circuits with large physical qubit counts and modest, but nontrivial, active dimension.

    \FloatBarrier
    \subsection{Magic State Cultivation}
    \label{sec:results_msc}

    Magic state cultivation~\cite{gidneyMagicStateCultivation2024} prepares high-fidelity logical $|T\rangle$ states by first injecting a physical $T$ state into a small $d=3$ color code, cultivating the state by growing to a $d=5$ color code, and finally escaping it into a full $d=15$ surface code. The protocol is dominated by Clifford structure, but the injection and cultivation stages contain the non-Clifford operations that determine the prepared magic state. The original MSC paper simulated the full protocol using $S$-gate proxy circuits, which replaced all $T$ gates with $S$ gates, and used small-scale state-vector simulations to estimate the degradation introduced by the true $T$-gate circuit. Prior exact simulation work with true $T$-gate circuits has stopped at the $d{=}5$ cultivation stage, using either the GPU-based generalized stabilizer simulator SOFT~\cite{liSOFTHighperformanceSimulator2025} or sparse Pauli-frame representations such as PFSR~\cite{tuloupComputingLogicalError2026a}. The full end-to-end circuit including the escape stage has remained out of reach.

    Throughout the inject and cultivation analysis, we report logical error rates $\epsilon_L$ per kept shot under uniform circuit-level depolarizing noise at physical strength $p$. For the end-to-end analysis, we instead report a conservative infidelity estimate for the prepared $T$ state after decoder correction, as described in Appendix~\ref{sec:t_state_fidelity}. In both cases, we compare against the $S$-gate proxy used as a baseline by Gidney et al.\ in the original cultivation work~\cite{gidneyMagicStateCultivation2024}. We explicitly build on the source code released with that work~\cite{mscGithub}, adding the $T$-gate substitution, measurement changes, and feed-forward corrections needed for the true $T$-gate circuit \cite{clifftPaperRepo}. The discrepancy between the $T$-gate and $S$-proxy estimates---the ``$T/S$ gap''---is the central question Clifft lets us answer at the full end-to-end scale.

    \subsubsection{Inject and Cultivate Stage}
    \label{sec:results_msc_ic}

    Figure~\ref{fig:msc_ic} shows the logical error rate versus expected attempts per kept shot at the cultivation stage for both the $d{=}3$ and $d{=}5$ color codes. We sweep physical noise from $p=5\times 10^{-4}$ to $p=10^{-2}$ at $d{=}3$ and from $p=5\times 10^{-4}$ to $p=2\times 10^{-3}$ at $d{=}5$. The bottom panel of each subplot shows the $T/S$ error ratio with a 95\% Bayesian credible interval, computed as described in Appendix~\ref{app:intervals}. The Clifft $T$-gate curves agree with the SOFT ground-truth values reported in Ref.~\cite{liSOFTHighperformanceSimulator2025} at the noise levels they cover, and are consistent with the PFSR trends reported in Ref.~\cite{tuloupComputingLogicalError2026a}. Both prior works found that the $d=5$ $T/S$ ratio is much closer to $15\times$ than to the $2\times$ extrapolation used in the original MSC paper. At lower physical noise rates, we observe an even larger ratio, reaching approximately $30\times$ at $p=5\times 10^{-4}$ for $d=5$. Although the statistical bands widen at lower $p$, the data suggest that the $T/S$ discrepancy increases over the range studied.

    \begin{figure*}[t]
    \centering
    \begin{subfigure}[b]{0.49\textwidth}
        \centering
        \includegraphics[width=\textwidth]{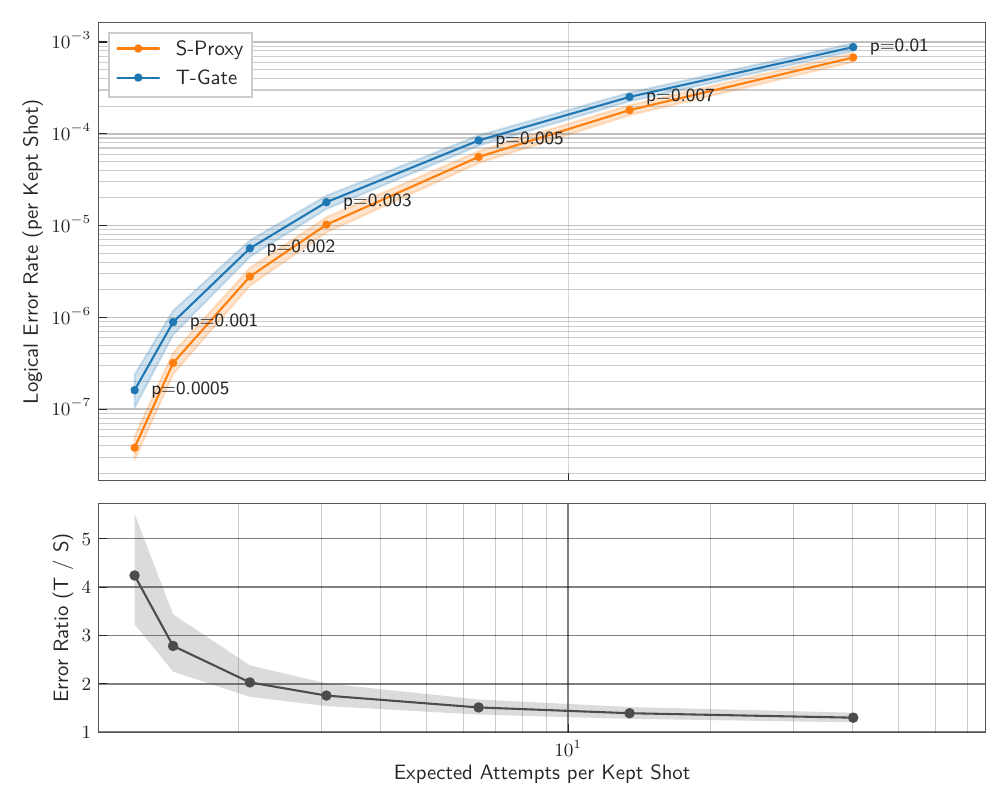}
        \caption{$d=3$ color code.}
        \label{fig:msc_ic_d3}
    \end{subfigure}
    \hfill
    \begin{subfigure}[b]{0.49\textwidth}
        \centering
        \includegraphics[width=\textwidth]{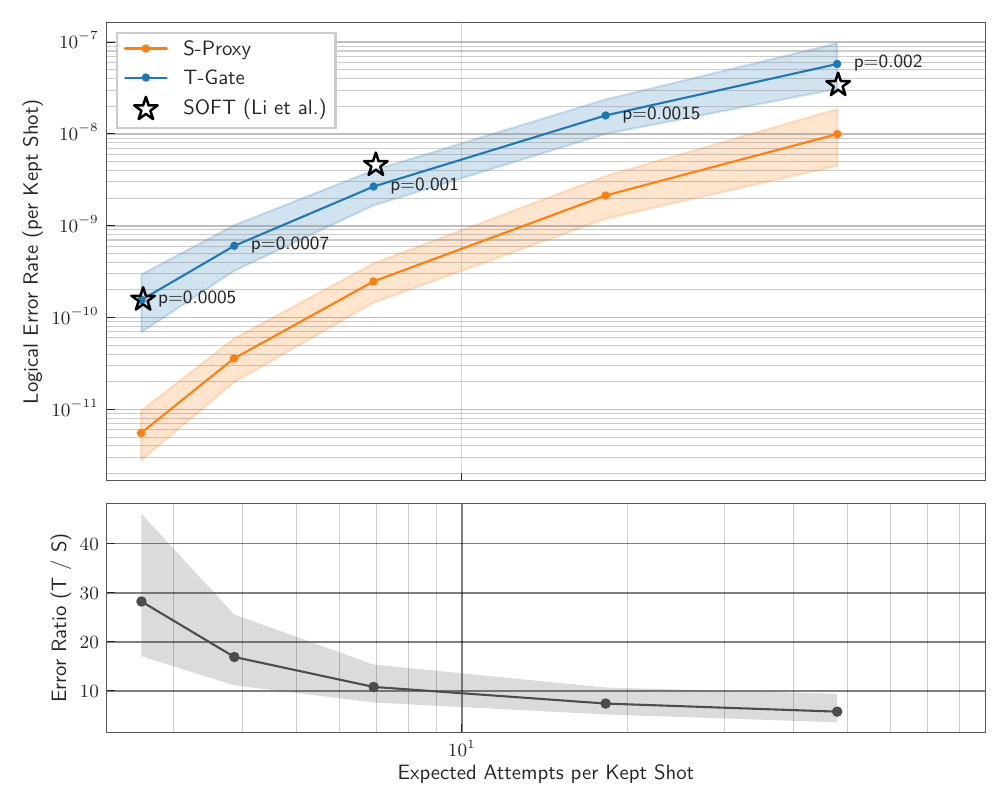}
        \caption{$d=5$ color code.}
        \label{fig:msc_ic_d5}
    \end{subfigure}
    \caption{Inject + cultivate stage: $T$-gate vs $S$-proxy logical error rate per kept shot, estimated using stratified importance sampling across uniform circuit-level depolarizing noise. Shaded regions in the top panels show absolute error-rate values within a factor of 1000 in likelihood relative to the maximum-likelihood estimate, following the convention used by Sinter for rare-event visualization. Shaded regions in the bottom panels show 95\% Bayesian credible intervals for the $T/S$ ratio, computed as described in Appendix~\ref{app:intervals}. For $d=5$, we also show the SOFT results from Table III in Li et al.~\cite{liSOFTHighperformanceSimulator2025} for comparison.}
    \label{fig:msc_ic}
    \end{figure*}

    Beyond the scientific result, Clifft obtains these estimates on a single CPU instance rather than on a GPU cluster. Table~\ref{tab:ic_cost} compares Clifft's true $T$-gate $d{=}5$ inject and cultivation run with SOFT's published $d{=}5$ run. We report wall-clock machine-hours, not summed worker-hours or per-GPU-hours: one machine is a single AWS \texttt{c6i.8xlarge}\footnote{This is an older CPU generation than the one used for the benchmarks in Section~\ref{sec:results_qec}. We chose it for lower spot-instance cost on these longer runs.} instance with 16 physical CPU cores for Clifft, and the full 16-H800 GPU cluster used by SOFT~\cite{liSOFTHighperformanceSimulator2025}. On this basis, Clifft reaches comparable low-rate estimates in approximately 12 machine-hours, versus approximately 388 machine-hours for SOFT, a $\sim 32\times$ reduction in machine time before accounting for the substantial hourly cost difference between a CPU instance and a 16-GPU cluster.

    This reduction comes from two independent factors. First, Clifft uses stratified importance sampling, following Ref.~\cite{tuloupComputingLogicalError2026a}, which reduces the total number of required shots by $\sim 2.5\times$. Second, Clifft is faster per shot: simulating one shot of this circuit takes $\sim 7.4\times 10^{-6}$ s on one c6i.8xlarge core, versus $\sim 9.4\times 10^{-5}$ s on one H800 GPU as reported in Table V of Li et al.~\cite{liSOFTHighperformanceSimulator2025}, a $\sim 13\times$ single-shot speedup. Because both runs use shot-level data parallelism across 16 workers or GPUs, this per-shot speedup carries through directly to per-machine throughput.

    \begin{table*}[t]
    \centering
    \adjustbox{max width=\textwidth}{%
    \small
    \begin{tabular}{l l r r r r r r}
        \toprule
        \textbf{Simulator} & \textbf{Hardware} & \textbf{Total shots} & \textbf{Shots/s} & \textbf{Machine-h} & $\boldsymbol{\epsilon_L}$($p{=}5{\times}10^{-4}$) & $\boldsymbol{\epsilon_L}$($p{=}10^{-3}$) & $\boldsymbol{\epsilon_L}$($p{=}2{\times}10^{-3}$) \\
        \midrule
        Clifft ($T$-gate)  & c6i.8xlarge (CPU)        & $9.6{\times}10^{10}$ & 2.17M & 12   & $1.56{\times}10^{-10}$ & $2.67{\times}10^{-9}$  & $5.78{\times}10^{-8}$ \\
        SOFT ($T$-gate)    & $16{\times}$ H800 (GPU)  & $2.4{\times}10^{11}$ & 171k  & 388  & $1.57{\times}10^{-10}$ & $4.59{\times}10^{-9}$  & $3.41{\times}10^{-8}$ \\
        \bottomrule
    \end{tabular}
    }
    \caption{Cost comparison for the true $T$-gate $d{=}5$ inject and cultivation stage. SOFT total shots and per-$p$ error rates are reproduced from Tables III and V of Li et al.~\cite{liSOFTHighperformanceSimulator2025}. Reported throughput (Shots/s) and runtime (Machine-h) are normalized to the full machine configuration used for each simulator: one \texttt{c6i.8xlarge} instance with 16 physical CPU cores for Clifft, and the full 16-H800 GPU cluster for SOFT. SOFT's machine-level shots/s is computed by multiplying the reported single-H800 rate ($\sim 10.7$k shots/s, the inverse of the $9.37 \times 10^{-5}$ s/shot single-GPU runtime in Li et al.\ Table V) by 16 (the cluster's GPU count); SOFT's machine-hours come from dividing total shots by this aggregate rate. $\epsilon_L$ is the logical error rate per kept shot at the listed physical noise strength.}
    \label{tab:ic_cost}
    \end{table*}

    \FloatBarrier
    \subsubsection{End-to-End Cultivation Including the Escape Stage}
    \label{sec:results_msc_e2e}

    Beyond the cultivation stage, the full MSC protocol grows the cultivated state from the small color code into a larger surface code of distance $d{=}15$. This end-to-end circuit has 463 qubits, but the escape stage itself is purely Clifford. Unlike the inject and cultivation stages, which rely on detector post-selection to reject failed preparations, the escape stage relies on a decoder to identify and correct errors. The decoder reports a gap between the most likely and second most likely error configurations, and shots with a gap below a chosen threshold are rejected. Sweeping this gap threshold trades lower acceptance probability for improved output quality.

    Because we use the decoder constructed for the $S$-proxy circuit, the plots below report a conservative infidelity estimate for the prepared $T$ state after decoder correction rather than a direct logical error rate. Appendix~\ref{sec:t_state_fidelity} explains this estimator. As before, we use the code released alongside the original MSC paper for decoding, gap thresholding, and plotting, with our modifications in \cite{clifftPaperRepo}.

    \begin{figure*}[t]
    \centering
    \begin{subfigure}[b]{0.49\textwidth}
        \centering
        \includegraphics[width=\textwidth]{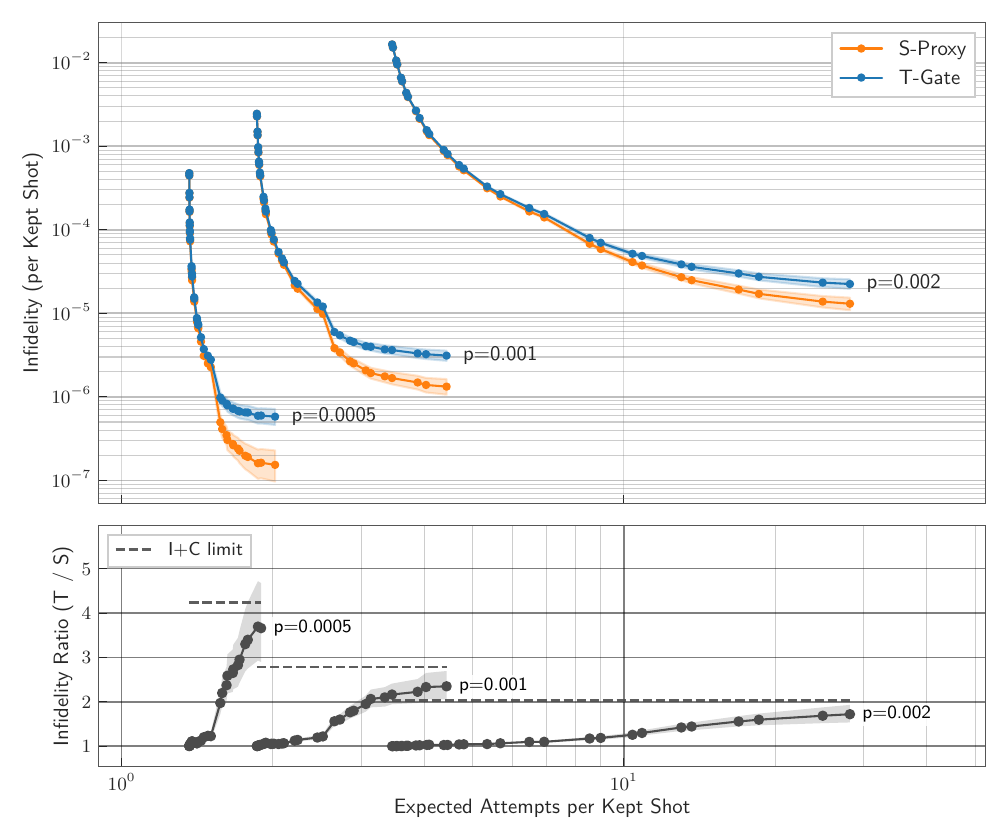}
        \caption{End-to-end: $d=3$ cultivation, $d=15$ escape}
        \label{fig:msc_e2e_d3}
    \end{subfigure}
    \hfill
    \begin{subfigure}[b]{0.49\textwidth}
        \centering
        \includegraphics[width=\textwidth]{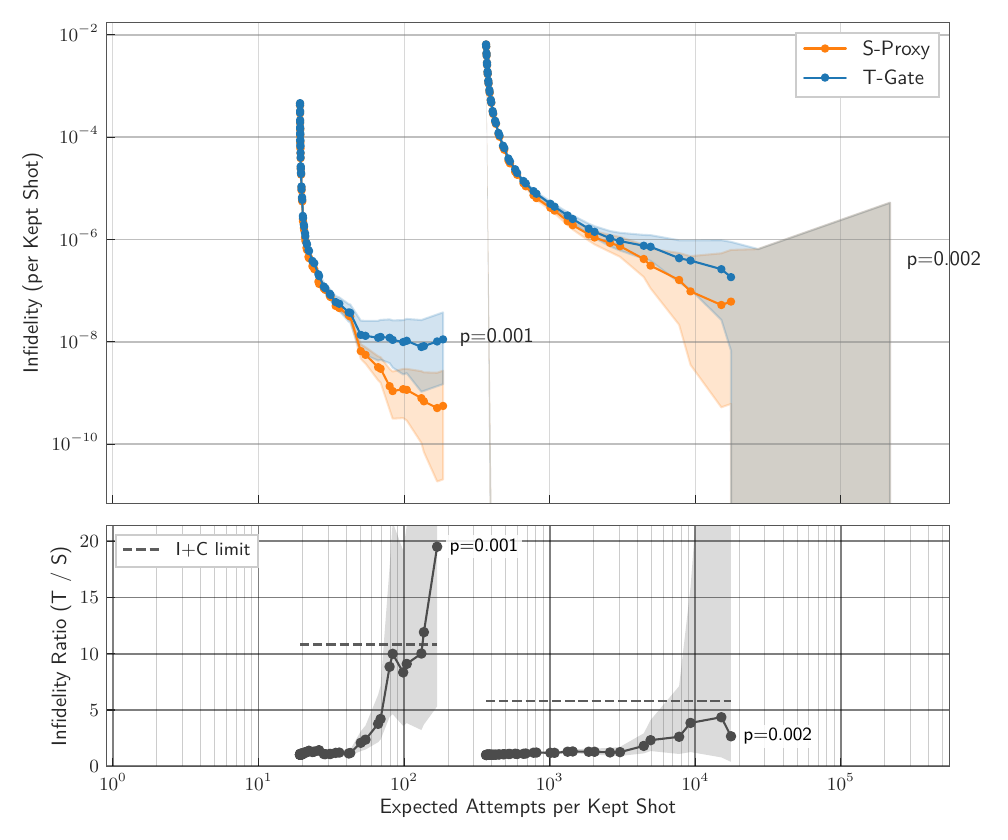}
        \caption{End-to-end: $d=5$ cultivation, $d=15$ escape}
        \label{fig:msc_e2e_d5}
    \end{subfigure}
    \caption{End-to-end magic state cultivation including the escape stage to a $d=15$ surface code. Top panels: conservative infidelity estimate per kept shot after decoder correction for the true $T$-gate circuit (Clifft) and the $S$-proxy baseline (Gidney et al.~\cite{gidneyMagicStateCultivation2024}). Bottom panels: $T/S$ infidelity ratio with 95\% Bayesian credible intervals; the dashed line marks the inject and cultivation stage ratio from Section~\ref{sec:results_msc_ic} as a reference. Note that the credible intervals for $d=5$ cultivation extend beyond the plotted range due to low rare-event statistics.}
    \label{fig:msc_e2e}
    \end{figure*}

    Figure~\ref{fig:msc_e2e} shows the result of direct end-to-end simulations with $d=3$ and $d=5$ cultivate stages, overlaying the Clifft $T$-gate curves with the $S$-proxy data released by Gidney et al. These end-to-end runs did not use importance sampling. We collected 3 billion $T$-gate shots at $d=3$ ($\sim 18$ machine-hours) and 390 billion shots at $d=5$ ($\sim 335$ machine-hours), both normalized to a single AWS \texttt{c6i.8xlarge} instance using the same machine-scale convention as in Table~\ref{tab:ic_cost}. Even at this scale, the deep decoder-gap tail of the $d=5$ desaturation curve remains shot-starved, which appears as widening 95\% credible-interval bands in the ratio plot.

    Overall, these results clarify how the $T/S$ discrepancy manifests at the full protocol level. At low decoder-gap thresholds, the conservative end-to-end infidelity estimate is dominated by escape-stage decoding failures, which are largely insensitive to the underlying $T$ versus $S$ circuit differences. The two curves therefore overlap in this regime. As the gap threshold increases and decoder-induced failures are progressively filtered out, the observed $T/S$ ratio moves toward the inject and cultivation stage ratio, where the intrinsic discrepancy between the true $T$ circuit and the $S$ proxy is pronounced. This separation of regimes indicates that the cultivation-stage discrepancy is a strong predictor of the high-confidence end-to-end behavior, while the escape-stage decoder controls how quickly that asymptotic regime is reached as a function of gap-threshold post-selection.

    \FloatBarrier
    \section{Conclusion}
    \label{sec:conclusion}

    We introduced Clifft, an exact simulator for near-Clifford quantum circuits. Clifft factors the simulated state into an offline Clifford frame, an online Pauli frame, and a dynamically sized active state vector. This shifts deterministic Clifford-coordinate tracking to compile time and confines sample-time dense evolution to the active virtual dimension $k$. The resulting compile-once, sample-many architecture extends the practical workflow established by Stim into non-Clifford regimes while retaining exact treatment of noise, mid-circuit measurement, and classical control.

    Across benchmarks, Clifft remained competitive in the dense-state limit, degraded gracefully in the pure-Clifford limit, and showed its clearest advantage on low-magic fault-tolerant circuits where $k_{\max}$ remains modest despite large physical system size. This enabled exact end-to-end simulation of magic state cultivation through the escape stage, over hundreds of billions of shots on commodity CPUs. These simulations show that the discrepancy between the true $T$-gate circuit and the $S$-proxy approximation persists at the full-protocol level, becoming visible once decoder-gap post-selection suppresses escape-stage-dominated failures.

    Several directions follow from this work. Better global scheduling and localization heuristics may reduce $k_{\max}$ or the total time spent at large active dimension beyond what the current greedy Pauli localization and HIR optimization passes achieve. Broader gate support, including other non-Clifford resources, would expand the range of fault-tolerant protocols Clifft can target. Coherent-noise workloads motivate approximation schemes, hybrid stabilizer-rank methods, and richer support for non-Pauli noise models such as leakage. At the systems level, improved memory layout, lower dispatch overhead, GPU execution, and more aggressive vectorization of active-array kernels provide direct paths to higher throughput.

    More broadly, HIR provides a compiler-facing representation of early fault-tolerant circuits, not only a simulator-internal IR. In Clifft, HIR is used to expose non-Clifford structure, commute operations, and specialize circuits for the SVM backend. The same representation could support stronger Clifford+$T$ optimization passes, integration with Pauli-based compiler tools, and new backends targeting fault-tolerant instruction sets or architecture-specific primitives. This suggests a path from Clifft as a simulator toward a broader compiler toolchain for early fault-tolerant quantum programs. Taken together, these results show that frame-factored exact simulation occupies a useful middle ground between pure stabilizer simulation and dense state-vector methods, providing an exact and practical path to simulating large fault-tolerant circuits with substantial Clifford structure and localized non-Clifford effects.

    \section*{Acknowledgements}
    We thank Will Zeng, Nathan Shammah and the rest of the Unitary Foundation team for their support and feedback throughout this project. We also thank Craig Gidney for feedback on generating the $T$-gate MSC circuits, and for his commitment to open sourcing Stim and the code and data for the original MSC paper, which were invaluable for this work. We also thank the developers of SOFT and TSim for open sourcing their code and data, which enabled direct comparison against their results.

    We used generative AI tools, including Gemini 3, ChatGPT+Codex 4.4/4.5, and Claude Opus 4.6/4.7, during the research, software-development, and writing workflow for this project. These tools assisted with code generation and review, implementation analysis, manuscript editing, and checks of selected derivations or arguments.

    This work was supported by the U.S. Department of Energy, Office of Science, Office of Advanced Scientific Computing Research, Accelerated Research in Quantum Computing under Award Number DE-SC0025336. This material is also based upon work supported by the U.S. Department of Energy, Office of Science, National Quantum Information Science Research Centers, Quantum Science Center.

    \bibliographystyle{quantum}
    \bibliography{refs}

    \appendix

    \section{Constructive Proof of Pauli Localization}
    \label{app:pauli_localization}
    We provide below a constructive proof of Lemma~\ref{lem:pauli_localization}, detailing the $\mathcal{O}(N)$ greedy algorithm for localizing any multi-qubit virtual Pauli operator to a single virtual qubit.
    \begin{proof}
        Let the multi-qubit virtual Pauli be parameterized by its Boolean support vectors $\mathbf{x}, \mathbf{z} \in \{0,1\}^N$ such that $\tilde{P} \propto X^{\mathbf{x}} Z^{\mathbf{z}}$. We construct $V$ in two mutually exclusive cases:

        \textbf{Case 1 ($\mathbf{x} \neq 0$):} The operator contains $X$-basis support. Select a pivot virtual qubit $v$ such that $x_v = 1$. For every other qubit $q \neq v$ where $x_q = 1$, append $\text{CNOT}_{v \to q}$ to $V$, which conjugates as:
        \begin{equation}
            \text{CNOT}_{v \to q} (X_v \otimes X_q) \text{CNOT}_{v \to q}^\dagger = X_v \otimes I_q
        \end{equation}
        This annihilates the $X$-support on $q$. Any $Z$-support on $v$ or $q$ propagates to a new vector $\mathbf{z}'$, reducing the intermediate operator to $\tilde{P}' \propto X_v \otimes Z^{\mathbf{z}'}$.

        To eliminate the remaining $Z$-support, append $\text{CZ}_{v, q}$ to $V$ for every $q \neq v$ where $z'_q = 1$. The CZ conjugation rule gives:
        \begin{equation}
            \text{CZ}_{v, q} (X_v \otimes Z_q) \text{CZ}_{v, q}^\dagger = X_v \otimes I_q
        \end{equation}
        This annihilates the $Z$-support on all $q \neq v$ without altering the $X_v$ pivot, fully localizing the operator to $v$ as either $X_v$ or $Y_v$. If $Y_v$ results, a virtual Phase gate ($S_v Y_v S_v^\dagger = -X_v$) completes the compression to $X_v$.

        \textbf{Case 2 ($\mathbf{x} = 0$):} The operator is a pure $Z$-string ($\mathbf{z} \neq 0$). Select a pivot $v$ such that $z_v = 1$. For every $q \neq v$ where $z_q = 1$, append $\text{CNOT}_{q \to v}$ (control $q$, target $v$) to $V$. Under conjugation:
        \begin{equation}
            \text{CNOT}_{q \to v} (Z_q \otimes Z_v) \text{CNOT}_{q \to v}^\dagger  = I_q \otimes Z_v
        \end{equation}
        This annihilates the $Z$-support on all $q \neq v$. The operator is successfully compressed to $Z_v$.

        Each two-qubit gate monotonically erases generator support on $q$ without spreading to previously cleared qubits, so the sequence $V$ requires at most $\mathcal{O}(N)$ gates and guarantees isolation to a single virtual qubit.
    \end{proof}

    \section{Active Operations in the Frame-Factored Representation}
    \label{app:active_operations}

    This section details how active operations are evaluated within the frame-factored representation.

    \subsection{Pauli Rotations}
    \label{sec:continuous_rotations}

    Consider a continuous non-Clifford rotation $\exp(-i \theta P_q)$ applied in the physical frame (e.g., the $T$ gate where $P_q = Z_q$ and $\theta = \pi/8$). The Heisenberg mapping transforms the physical generator $P_q$ into a multi-qubit virtual string $\tilde{P}_q$.

    Applying the virtual localization sequence $V$ yields a single-qubit Pauli $P_v$. Because $V$ is unitary, it distributes into the exponential:
    \begin{equation}
    	V \exp\left(-i \theta \tilde{P}_q\right) V^\dagger = \exp\left(-i \theta V \tilde{P}_q V^\dagger\right) = \exp\left(-i \theta \alpha P_v\right)
    \end{equation}
    If localization maps the generator natively to $X_v$ or $Y_v$, a virtual basis change (e.g., $H_v$) is appended to $V$ to ensure the rotation evaluates around the $Z$-axis (with $\alpha$ absorbed by modifying $\theta$ if necessary).

    To evaluate this localized rotation against the active state vector, it must first commute past the virtual Pauli frame $\tilde{P}^{(t)}$. Because Pauli matrices either commute or anti-commute, passing the rotation through the frame extracts a deterministic algebraic sign:
    \begin{equation}
    	\exp\left(-i \theta Z_v\right) \tilde{P}^{(t)} = \tilde{P}^{(t)} \exp\left(-i \theta (-1)^c Z_v\right)
    \end{equation}
    where $c=1$ if $\tilde{P}^{(t)}$ anti-commutes with $Z_v$ (e.g., possesses $X_v$ or $Y_v$ parity), and $c=0$ otherwise. Thus, the effective continuous phase applied to the subspace depends strictly on the boolean parity of the Pauli frame. The localized rotation acts on the virtual state vector in one of three ways:

    \textbf{1. Trivial Phase ($v \in D$, generator natively $Z_v$):}
    Because $v \in D$, the state on axis $v$ is $|0\rangle_v$. The diagonal rotation acts simply as a scalar multiplier $e^{-i\theta(-1)^c}$, which is absorbed directly into the global scalar $\gamma$. The continuous vector $|\phi\rangle_A$ remains isolated, and the active dimension $k$ is unchanged.

    \textbf{2. Subspace Expansion ($v \in D$, requires $H_v$):}
    If the generator localized to the $X$-basis, the necessary virtual Hadamard maps $|0\rangle_v \to |+\rangle_v$. This operation promotes $v$ to the active set ($A \leftarrow A \cup \{v\}$, $k \leftarrow k+1$). The state vector dimensionality expands via the tensor product $|\phi\rangle_A \leftarrow |\phi\rangle_A \otimes |+\rangle$, and the phase rotation is subsequently evaluated on the newly active degree of freedom.

    \textbf{3. Active Subspace Rotation ($v \in A$):}
    If the virtual qubit is already part of the non-Clifford superposition, active set is unchanged. The diagonal phase is applied directly to the corresponding tensor factor within the complex amplitudes of $|\phi\rangle_A$.

    \subsection{Projective Pauli Measurements}
    \label{app:measurement_updates}

    After Heisenberg mapping and Pauli localization, a physical measurement is represented by a single-qubit virtual observable $M_v \in \{X_v,Z_v\}$ and projector
    \begin{equation}
        \Pi_m = \frac{1}{2}\bigl(I + (-1)^m M_v\bigr),
    \end{equation}
    where $m \in \{0,1\}$ is the physical measurement outcome.

    Before the projector acts on the virtual state, it is commuted past the virtual Pauli frame:
    \begin{equation}
        \Pi_m \tilde{P}^{(t)}
        =
        \tilde{P}^{(t)} \Pi_{m \oplus p},
    \end{equation}
    where $p=1$ if $M_v$ anticommutes with $\tilde{P}^{(t)}$, and $p=0$ otherwise. The Pauli frame therefore contributes a deterministic inversion of the measurement branch.

    Applying the shifted projector $\Pi_{m \oplus p}$ to
    $|\phi\rangle_A \otimes |0\rangle_D$ yields two cases.

    \textbf{Dormant measurement ($v \in D$).}
    The virtual qubit $v$ is known to be in the state $|0\rangle_v$.
    If $M_v=Z_v$, the measurement is deterministic in the virtual basis. The physical outcome is fixed by the Pauli-frame shift, $m=p$, and no state-vector update is required.

    If $M_v=X_v$, the measurement is conjugate to the dormant $|0\rangle_v$ basis. The physical outcome is uniformly random, again shifted by the Pauli frame. The post-measurement virtual state is an $X_v$ eigenstate. To restore the dormant convention, a virtual Hadamard $H_v$ is absorbed into the Clifford frame and Pauli frame:
    \begin{equation}
        U_C \leftarrow U_C H_v,
        \qquad
        \tilde{P} \leftarrow H_v \tilde{P} H_v.
    \end{equation}
    The active state vector is unchanged and $v$ remains dormant.

    \textbf{Active measurement ($v \in A$).}
    If $v$ belongs to the active set, the shifted projector is applied directly to the corresponding tensor factor of $|\phi\rangle_A$. For a $Z_v$ measurement, this masks amplitudes according to the measured branch. For an $X_v$ measurement, it folds amplitudes across the measured axis. After normalization, the measured qubit is disentangled from the remaining active state. If needed, a virtual basis change such as $H_v$ is absorbed into the frames so the decoupled qubit returns to the dormant $|0\rangle_v$ convention. The qubit is then demoted from $A$ to $D$, reducing the active dimension by one.
    \subsection{Conditional Pauli Operations}
    \label{sec:conditional_paulis}

    Unlike continuous rotations, feed-forward corrections and stochastic noise events do not perturb amplitudes and therefore bypass Pauli localization entirely. They are modeled as discrete conditional Pauli operations governed by a classical boolean parameter $c \in \{0, 1\}$—representing either a deterministic classical measurement outcome or a random sample drawn from an error distribution.

    We Heisenberg map the physical Pauli operator $E$ into a multi-qubit virtual Pauli string $\tilde{E}$:
    \begin{equation}
    	|\psi^{(t+1)}\rangle = E^c |\psi^{(t)}\rangle = \gamma^{(t)} U_C^{(t)} (\tilde{E})^c \tilde{P}^{(t)} \Big( |\phi\rangle_A \otimes |0\rangle_D \Big)
    \end{equation}
    Because the $N$-qubit Pauli group is closed under multiplication, the product $(\tilde{E})^c \tilde{P}^{(t)}$ collapses into a new phase-free virtual frame $\tilde{P}^{(t+1)}$ and an overall scalar phase $\alpha \in \{\pm 1, \pm i\}$:
    \begin{equation}
    	|\psi^{(t+1)}\rangle = \Big( \gamma^{(t)} \cdot \alpha \Big) U_C^{(t)} \tilde{P}^{(t+1)} \Big( |\phi\rangle_A \otimes |0\rangle_D \Big)
    \end{equation}
    The resulting state update is executed entirely within the discrete basis tracking frames, leaving the active state vector $|\phi\rangle_A$ unchanged.

    \section{Measuring the $T$ state fidelity on the surface code}
    \label{sec:t_state_fidelity}

    Directly measuring the encoded $|T\rangle$ state in the $\frac{1}{\sqrt{2}}(X+Y)$ basis is not available as a standard transversal measurement in the surface code. The Eastin-Knill theorem restricts transversal logical gates to a finite group \cite{eastinRestrictionsTransversalEncoded2009}. Instead, we use a Clifft-native expectation-value probe to evaluate the logical $Y_L$ expectation value after decoder correction. This gives a conservative estimate of the output $T$-state fidelity in the regime studied here.

    For an ideal $|T\rangle$ state,
    \begin{equation}
        \langle X \rangle = \frac{1}{\sqrt{2}},
        \qquad
        \langle Y \rangle = \frac{1}{\sqrt{2}},
        \qquad
        \langle Z \rangle = 0 .
    \end{equation}
    The fidelity of a prepared logical state $\rho$ relative to the ideal logical $|T\rangle$ state is therefore
    \begin{equation}
        \label{eq:t_state_fidelity}
        F
        =
        \frac{1}{2}
        +
        \frac{1}{2\sqrt{2}}\langle X_L \rangle
        +
        \frac{1}{2\sqrt{2}}\langle Y_L \rangle .
    \end{equation}
    If $\langle X_L\rangle \geq \langle Y_L\rangle$, then replacing $\langle X_L\rangle$ by $\langle Y_L\rangle$ gives the conservative estimate
    \begin{equation}
        \label{eq:t_state_fidelity_bound}
        F
        \geq
        \frac{1}{2}
        +
        \frac{1}{\sqrt{2}}\langle Y_L \rangle .
    \end{equation}
    Equivalently, the reported infidelity
    \begin{equation}
        1 -
        \left(
            \frac{1}{2}
            +
            \frac{1}{\sqrt{2}}\langle Y_L \rangle
        \right)
    \end{equation}
    is a conservative upper bound on the true infidelity when this inequality holds.

    This inequality is expected for the surface-code escape stage under the symmetric circuit-level noise and decoder used here. After syndrome extraction and decoding, residual logical failures can be described by an effective logical Pauli channel,
    \begin{equation}
        \Lambda(\rho)
        =
        (1 - P(X_L) - P(Y_L) - P(Z_L))\rho
        + P(X_L) X_L\rho X_L
        + P(Y_L) Y_L\rho Y_L
        + P(Z_L) Z_L\rho Z_L .
    \end{equation}
    Logical errors that anticommute with a measured observable attenuate its expectation value:
    \begin{align}
        \label{eq:logical_attenuation}
        \langle X_L \rangle_{\mathrm{noisy}}
        &=
        \bigl(1 - 2P(Y_L) - 2P(Z_L)\bigr)
        \langle X_L \rangle_{\mathrm{ideal}},
        \\
        \langle Y_L \rangle_{\mathrm{noisy}}
        &=
        \bigl(1 - 2P(X_L) - 2P(Z_L)\bigr)
        \langle Y_L \rangle_{\mathrm{ideal}} .
    \end{align}
    In a CSS surface code, logical $X_L$ and $Z_L$ failures are produced by single homologically nontrivial error chains of the corresponding type. A logical $Y_L$ failure requires both logical components to occur together, i.e., two independent orthogonal failure chains in the same local patch. Thus $P(Y_L)$ is expected to be much smaller than $P(X_L)$ and $P(Z_L)$, and $\langle Y_L\rangle$ is attenuated at least as strongly as $\langle X_L\rangle$.

    We verify this expectation for the setting used in the end-to-end MSC simulations by running surface-code memory experiments in the logical bases $b \in \{X,Y,Z\}$. For a $d=15$ surface code under uniform circuit-level depolarizing noise at $p=0.003$, sampled until $10^5$ observed errors, we find
    \begin{align}
        P(X_L) &= 4.679 \times 10^{-4}, \nonumber \\
        P(Z_L) &= 4.702 \times 10^{-4}, \nonumber \\
        P(Y_L) &= 8.372 \times 10^{-6}. \nonumber
    \end{align}
    These results confirm the expected hierarchy $P(Y_L) \ll P(X_L), P(Z_L)$ for the studied decoder and noise model. Consequently, using the decoder-corrected logical $Y_L$ expectation value gives a conservative fidelity estimate for the escaped $T$ states reported in Section~\ref{sec:results_msc_e2e}.

	\section{Bayesian Credible Intervals for Rate Ratios}
    \label{app:intervals}

    The ratio bands in Section~\ref{sec:results_msc} summarize uncertainty in the relative error or infidelity ratio between the true $T$-gate circuit and the $S$-proxy circuit. For these ratio estimates, we use a simple Bayesian model for the ratio of two unknown event probabilities.

    Consider two independent estimates with observed event counts $k_1,k_2$ and trial counts $n_1,n_2$. Each unknown rate $p_i$ is modeled with a binomial likelihood and Jeffreys prior \cite{jeffreys1946invariant, lee2012bayesian},
    \begin{equation}
        p_i \sim \mathrm{Beta}\left(\frac{1}{2},\frac{1}{2}\right).
    \end{equation}
    After observing $k_i$ events in $n_i$ trials, conjugacy gives the posterior
    \begin{equation}
        p_i \mid k_i,n_i
        \sim
        \mathrm{Beta}
        \left(
            k_i + \frac{1}{2},
            n_i - k_i + \frac{1}{2}
        \right).
    \end{equation}

    The posterior distribution of the ratio $r=p_1/p_2$ does not have a simple closed form, so we estimate it by Monte Carlo sampling. We draw paired samples from the two Beta posteriors and compute their elementwise ratios. The median of the resulting ratio samples is used as the point estimate, and the 2.5th and 97.5th percentiles define the reported 95\% Bayesian credible interval. Under this model and prior, the posterior probability that the true ratio lies in this interval is 95\%.

    This appendix describes only the ratio intervals shown in the lower panels of Figures~\ref{fig:msc_ic} and~\ref{fig:msc_e2e}. The absolute error-rate bands in the upper panels of Figure~\ref{fig:msc_ic} use a separate likelihood-support convention for rare-event visualization, as described in that figure caption and implemented in Stim/Sinter \cite{gidneyStimFastStabilizer2021}.

\end{document}